\theoremstyle{plain}
\newtheorem{theorem}{Theorem}[section]
\newtheorem{lemma}[theorem]{Lemma}
\newtheorem{corollary}[theorem]{Corollary}
\theoremstyle{definition}
\newtheorem{definition}[theorem]{Definition}
\newtheorem{example}[theorem]{Example}
\newcommand{\act}{\cdot}
\newcommand{\acts}{\ast}
\newcommand{\arrow}{\rightarrow}
\newcommand{\as}{\overline{a}}
\newcommand{\Atom}{\mathbb{A}}
\newcommand{\bimp}{\Leftrightarrow}
\newcommand{\comment}[1]{}
\newcommand{\defeq}{\;\triangleq\;}
\newcommand{\dom}{\mathit{dom}}
\newcommand{\ds}{\mathit{ds}}
\newcommand{\ent}{\vdash}
\newcommand{\ento}{\vdash^o}
\newcommand{\eq}{\approx}
\newcommand{\fe}[1][\nabla]{#1}
\newcommand{\FE}{\mathsf{FE}}
\newcommand{\freshfor}{
  \ensuremath{\mathrel{%
      \hbox{\ooalign{%
          \hfil$\approx$\hfil\cr%
          \hfil$\sslash$\hfil}}}}}
\newcommand{\idp}{\iota}
\newcommand{\imp}{\Rightarrow}
\newcommand{\nsupp}{\mathrel{\#}}
\newcommand{\op}[1][op]{\mathit{#1}}
\newcommand{\Op}{\mathsf{Op}}
\newcommand{\Perm}{\mathsf{Perm}}
\newcommand{\pfun}{\rightharpoonup}
\newcommand{\Pow}{\mathcal{P}}
\newcommand{\ruleref}[1]{(\RefTirName{#1})}
\newcommand{\se}[1][\Gamma]{#1}
\newcommand{\SE}{\mathsf{SE}}
\newcommand{\seof}[1]{#1^{:}}
\newcommand{\sff}{\mbox{\scriptsize$\freshfor$}}
\newcommand{\Sig}{\Sigma}
\newcommand{\sort}[1][s]{\mathsf{#1}}
\newcommand{\Sort}{\mathsf{Sort}}
\newcommand{\sub}{\sigma}
\newcommand{\supp}{\mathit{supp}}
\newcommand{\swap}[2]{(#1\;#2)}
\newcommand{\Term}[3]{#1_{#2}(#3)}
\newcommand{\Th}{\mathbb{T}}
\newcommand{\tm}{\mathsf{tm}}
\newcommand{\Var}{\mathsf{Var}}
\title{Nominal Logic with Equations Only}
\author{Ranald Clouston
\thanks{The author gratefully acknowledges the advice of Andrew Pitts and the
reviewers, and the support of the Woolf Fisher Trust.}
\institute{Logic and Computation Group, Research School of Computer Science,
Canberra, ACT 0200, Australia}
\email{ranald.clouston@anu.edu.au}
}
\begin{document}
\maketitle

\begin{abstract}
Many formal systems, particularly in computer science, may be captured by
equations modulated by side conditions asserting the ``freshness of names'';
these can be reasoned about with Nominal Equational Logic (NEL). Like most
logics of this sort NEL employs this notion of freshness as a first class
logical connective. However, this can become inconvenient when attempting to
translate results from standard equational logic to the nominal setting. This
paper presents proof rules for a logic whose only connectives are equations,
which we call Nominal Equation-only Logic (NEoL). We prove that NEoL is just as
expressive as NEL. We then give a simple description of equality in the empty
NEoL-theory, then extend that result to describe freshness in the empty
NEL-theory.
\end{abstract}

\section{Introduction}

Many formal systems, particularly in computer science, may be captured via
equations modulated by side conditions asserting certain names are \emph{fresh
for} (not in the free names of) certain metavariables:
\begin{description}
  \item[First-order logic:]
  $\Phi\supset(\forall a.\,\Psi)\;=\;\forall a.\,(\Phi\supset\Psi)$ if $a$ is
    fresh for $\Phi$;
  \item[$\lambda$-calculus:]
  $\lambda\,a.\,f\,a \;=_{\eta}\; f$ if $a$ is fresh for $f$;
  \item[$\pi$-calculus:]
  $(\nu a\,\,x)\mid y \;=\; \nu a\,(x\mid y)$ if $a$ is fresh for $y$.
\end{description}
We may express such modulated equations, and hence reason formally about the
systems described by them, with \emph{Nominal Equational Logic
(NEL)} \cite{Clouston:Nominal}. NEL-theories can also express the notions of
binding and $\alpha$-equivalence such systems exhibit \cite{Clouston:Binding}.
NEL generalises standard equational logic by employing the \emph{nominal
sets} model \cite{Pitts:Nominal}, a refinement of the earlier Fraenkel-Mostowski
sets model \cite{Gabbay:New}, where the manipulation of names is modelled by the
action of \emph{permutations}.

In the examples above the `fresh for' relation, represented in NEL by the symbol
`$\freshfor$', is attached to metavariables as a side condition to the
equations. However this relation generalises naturally and conveniently to a
relation asserting certain names are fresh for certain \emph{terms}. As such, in
NEL and other nominal logics, $\freshfor$ is treated as a first class logical 
connective, rather than merely being used in side conditions.

Standard equational logic is an extremely well studied system (e.g.
\cite[Cha. 3]{Burris:Logic}), and NEL's development philosophy was to maintain
as close a relationship as possible to this standard account, so that well known
results can be transferred to a setting with names and binding with a minimum of
difficulty. There are two orthogonal ways in which this can be done. The first
is to translate techniques and results from equational logic to NEL, such as
term rewriting, Lawvere theories, algebras for a monad, and Birkhoff-style
closure results. The second is to combine NEL with other extensions of
equational logic, of which a multitude exist - partial, order-sorted,
conditional, membership, fuzzy, and so forth.

There is a tension here, as we wish to exploit known results of equational logic
while also having the convenience of $\freshfor$ as a first class logical
connective, so that NEL is no longer really an equational logic at all. Consider
the case of Lawvere theories \cite{Lawvere:Functorial}. In this category
theoretic view of equational logic, an equational theory is mapped to a
\emph{classifying category}, whose arrows are tuples of terms, and the equality
of arrows is asserted to correspond to the provable equality of terms in the
theory. This is intuitive because equality has a clear meaning in the category
theoretic setting, as it must in any mathematical setting. This is not true of
freshness, which is a notion bespoke to the nominal sets model with no obvious
meaning in, say, standard category theory.

Fortunately, it is known that in a variety of contexts freshness judgements can
be translated into equally expressive equations with freshness side conditions.
The earliest such result to our knowledge is for the related logic of Nominal
Algebra (see e.g. \cite[Lem. 4.5.1]{Gabbay:Nominal}); the corresponding result
for NEL is Lem. \ref{lem:frsheqeq} of this paper.  We can therefore treat
$\freshfor$ as syntactic sugar, and justifiably call NEL an equational logic.

However, working with the standard proof rules for NEL, in which $\freshfor$ is
extensively used as a first class connective, may still be highly inconvenient
when trying to exploit known results of equational logic. Developing an
analogue of Lawvere theories for NEL in \cite{Clouston:Lawvere} required some
complex proofs relating logical derivations in NEL to category theoretic
properties. Having freshness judgements in those proof derivations with no
obvious category theoretic interpretation, and therefore being forced to apply 
the conversion to equations of Lem. \ref{lem:frsheqeq} each time, would have
been extremely laborious, and render the proofs obscure. Instead, the
development of Nominal Lawvere theories used alternative proof rules for NEL
that employ equations only, relegating freshness assertions back to the side
conditions.

In this paper we present these proof rules (in slightly modified form), which we
call Nominal Equation-only Logic (NEoL), and in Sec. \ref{sec:neol} we show that
NEoL and NEL coincide. This result, which until now has only appeared in Sec.
5.5 of the author's thesis \cite{Clouston:Equational}, is crucial to the
published proof that Nominal Lawvere theories correspond to NEL-theories. We
posit that NEoL will continue to be convenient when applying standard equational
logic results to names and binding, even as NEL remains the more convenient
system for applications.

In standard equational logic two terms are provably equal to each other in the
empty theory if and only if they are syntactically identical. Sec.
\ref{sec:empty} presents for the first time a simple syntax-directed description
of equality in the empty theory for NEoL. Cor. \ref{cor:freshback} extends this
result to NEL to give a description of freshness in the empty theory. Finally
Sec. \ref{sec:related} compares NEoL and NEL to three related notions in the
literature of equational logic over nominal sets.

\section{Nominal Sets}

We will first introduce the basic mathematics of the nominal sets model, which
will be necessary for the presentation of the syntax of Nominal Equational Logic
in the next section.

Fix a countably infinite set $\Atom$ of \emph{atoms}, which we will use as
names. The set $\Perm$ of (finite) \emph{permutations} consists of all
bijections $\pi:\Atom\to\Atom$ whose domain
\begin{equation}
  \label{eq:nom1}
  \supp(\pi)\defeq\{a\mid\pi(a)\neq a\}
\end{equation}
is finite. $\Perm$ is a group with multiplication as permutation composition
$\pi'\pi(a)=\pi'(\pi(a))$, and identity as the permutation $\idp$ leaving all
atoms unchanged. $\Perm$ is generated by \emph{transpositions} $\swap{a}{b}$
that map $a$ to $b$, $b$ to $a$ and leave all other atoms unchanged. We will
make particular use of permutations known as \emph{generalised
transpositions}~\cite[Lem. 10.2]{Clouston:Nominal}. Let
\begin{equation*}
  \Atom^{(n)} \defeq \{(a_1,\ldots,a_n) \in \Atom^n \mid
    a_i\neq a_j\mbox{ for }1\leq i<j\leq n\}\enspace.
\end{equation*}
All the tuples of atoms we use in this paper will be from this set. Take
$\vec{a}=(a_1,\ldots,a_n),\vec{a}'=(a'_1,\ldots,a'_n)\in\Atom^{(n)}$ with
disjoint underlying sets. Then we define their generalised transposition as
\begin{equation*}
  \swap{\vec{a}}{\vec{a}'} \;\defeq\; \swap{a_1}{a'_1}\cdots\swap{a_n}{a'_n}
    \enspace.
\end{equation*}
A \emph{$\Perm$-set} is a set $X$ equipped with a
function, or $\Perm$-action, $(\pi,x)\mapsto\pi\act x$ from $\Perm\times X$ to
$X$ such that $\idp\act x=x$ and $\pi\act(\pi'\act x)=\pi\pi'\act x$.

Given such a $\Perm$-set $X$ we say that a set of atoms $\as\subseteq\Atom$
\emph{supports} $x\in X$ if for all $\pi\in\Perm$, $\supp(\pi)\cap\as=
\emptyset$ implies that $\pi\act x=x$.

\begin{definition}\label{def:nom}
A \emph{nominal set} is a $\Perm$-set $X$ with the \emph{finite support
property}: for each $x\in X$ there exists some finite $\as\subseteq\Atom$
supporting $x$.

If an element $x$ is finitely supported then there is a unique least such
support set~\cite[Prop.~3.4]{Gabbay:New}, which we write $\supp(x)$ and call
\emph{the support of} $x$. This may be read as the \emph{set of free names} of a
term. If $\as\cap\supp(x)=\emptyset$ for some $\as\subseteq\Atom$ we say that
\emph{$\as$ is fresh for $x$} and write $\as\nsupp x$, capturing the \emph{not
free in} relation.
\end{definition}

\begin{example}\label{ex:nom}
\begin{enumerate}
\item
  Any set becomes a nominal set under the trivial $\Perm$-action $\pi\act x=x$,
  with finite support property $\supp(x)=\emptyset$;
\item
  $\Atom$ is a nominal set with $\Perm$-action $\pi\act a=\pi(a)$ and $\supp(a)=
  \{a\}$;
\item
  $\Perm$ is a nominal set with $\Perm$-action $\pi\act\pi'=\pi\pi'\pi^{-1}$ and
  support as in \eqref{eq:nom1};
\item
  Finite products of nominal sets are themselves nominal sets given the
  element-wise $\Perm$-action and $\supp(x_1,\ldots,x_n)=\bigcup_{1\leq i\leq n}
  \supp(x_i)$;
\item
  $\Atom^{(n)}$, and the set of finite sets of atoms $\Pow_{fin}(\Atom)$, are
  nominal sets given the element-wise $\Perm$-actions. Supports correspond to
  underlying sets.
\end{enumerate}
\end{example}

\begin{lemma}\label{lem:somefacts}
Given a nominal set $X$, element $x\in X$, and permutations $\pi,\pi'\in\Perm$,
\begin{enumerate}
\item
  Given finite $\as\subseteq\Atom$, $\as\nsupp x$ implies $\pi\act\as\nsupp\pi
    \act x$;
\item
  The \emph{disagreement set} of $\pi$ and $\pi'$ is
  \begin{equation*}
    \ds(\pi,\pi') \defeq \{a \mid \pi(a)\neq\pi'(a)\}\enspace.
  \end{equation*}\
  Then $\ds(\pi,\pi')\nsupp x$ implies $\pi\act x=\pi'\act x$.
\end{enumerate}
\end{lemma}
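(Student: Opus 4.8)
The plan is to reduce both parts to a single structural fact, \emph{equivariance of support}: $\supp(\pi\act x)=\pi\act\supp(x)$, where $\pi$ acts element-wise on finite sets of atoms as in Example~\ref{ex:nom}. Everything else is then unwinding of definitions.

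For part (i), granting equivariance the argument is immediate: $\as\nsupp x$ says $\as\cap\supp(x)=\emptyset$, and since $\pi$ is a bijection on $\Atom$ this gives $(\pi\act\as)\cap(\pi\act\supp(x))=\emptyset$, i.e.\ $(\pi\act\as)\cap\supp(\pi\act x)=\emptyset$, which is exactly $\pi\act\as\nsupp\pi\act x$. So the real content of (i) is proving $\supp(\pi\act x)=\pi\act\supp(x)$. For this I would first show that $\pi\act\supp(x)$ supports $\pi\act x$: take any $\pi'\in\Perm$ with $\supp(\pi')\cap(\pi\act\supp(x))=\emptyset$; a short conjugation computation shows $\supp(\pi^{-1}\pi'\pi)=\pi^{-1}\act\supp(\pi')$ (an atom $a$ is moved by $\pi^{-1}\pi'\pi$ exactly when $\pi(a)$ is moved by $\pi'$), hence $\supp(\pi^{-1}\pi'\pi)\cap\supp(x)=\pi^{-1}\act\bigl(\supp(\pi')\cap(\pi\act\supp(x))\bigr)=\emptyset$; since $\supp(x)$ supports $x$ we get $\pi^{-1}\pi'\pi\act x=x$, and therefore $\pi'\act(\pi\act x)=\pi\act(\pi^{-1}\pi'\pi\act x)=\pi\act x$. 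By minimality of the support set (Definition~\ref{def:nom}) this yields $\supp(\pi\act x)\subseteq\pi\act\supp(x)$; instantiating the same inclusion at $\pi^{-1}$ and $\pi\act x$ gives $\supp(x)\subseteq\pi^{-1}\act\supp(\pi\act x)$, whence $\pi\act\supp(x)\subseteq\supp(\pi\act x)$, so the two sets coincide.

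For part (ii), the key observation is that $\ds(\pi,\pi')$ is nothing but $\supp(\pi^{-1}\pi')$: indeed $\pi^{-1}\pi'(a)\neq a$ iff $\pi'(a)\neq\pi(a)$, using injectivity of $\pi$. Thus the hypothesis $\ds(\pi,\pi')\nsupp x$ says precisely $\supp(\pi^{-1}\pi')\cap\supp(x)=\emptyset$, and since $\supp(x)$ supports $x$ the defining property of support gives $\pi^{-1}\pi'\act x=x$, that is, $\pi'\act x=\pi\act x$.

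I expect the only real obstacle to be part (i), and within it the bookkeeping around the conjugation identity $\supp(\pi^{-1}\pi'\pi)=\pi^{-1}\act\supp(\pi')$ and keeping the side of composition straight — the paper's conventions $\pi'\pi(a)=\pi'(\pi(a))$ and $\pi\act(\pi'\act x)=\pi\pi'\act x$ mean the conjugating permutation must be placed carefully. Part (ii) is essentially a one-line consequence once the disagreement set is recognised as the support of $\pi^{-1}\pi'$. One could alternatively cite equivariance of $\supp$ as a standard fact about nominal sets and cut the proof to a few lines, but I would include the short self-contained argument above.
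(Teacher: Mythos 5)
Your proposal is correct. Note, though, that the paper does not actually prove Lemma~\ref{lem:somefacts} at all: it simply cites \cite[Lem.~3.7]{Pitts:Alpha} and \cite[Lem.~7.3(iv)]{Clouston:Nominal}, so what you have written is a self-contained argument where the paper outsources the work. Your route is the standard one and checks out in detail: for (i), the conjugation identity $\supp(\pi^{-1}\pi'\pi)=\pi^{-1}\act\supp(\pi')$ and the computation $\pi'\act(\pi\act x)=\pi\act(\pi^{-1}\pi'\pi\act x)$ are both compatible with the paper's composition conventions, and the two-sided inclusion argument (instantiating the inclusion at $\pi^{-1}$ and $\pi\act x$) correctly yields equivariance of support, from which (i) is immediate; for (ii), the observation $\ds(\pi,\pi')=\supp(\pi^{-1}\pi')$ reduces the claim to the defining property of support. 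The only point worth flagging is that both parts quietly use that the least support $\supp(x)$ is itself a supporting set; this is exactly the content of the cited \cite[Prop.~3.4]{Gabbay:New} (``there is a unique least such support set''), so it is legitimate, but it deserves an explicit mention since minimality alone (an intersection of supports) does not trivially support $x$. In short: your proof buys self-containedness at the cost of a few lines, whereas the paper buys brevity by citation; mathematically they amount to the same facts.
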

\begin{proof}
\cite[Lem. 3.7]{Pitts:Alpha} and \cite[Lem. 7.3(iv)]{Clouston:Nominal}.
\end{proof}

Given $\Perm$-sets $X,Y$ we can define a $\Perm$-action on functions $f:X\to Y$ 
by
\begin{equation*}
  (\pi\act f)(x) \defeq \pi\act(f(\pi^{-1}\act x))\enspace.
\end{equation*}
Hence if $f$ maps $x\mapsto y$ then $\pi\act f$ maps $\pi\act x\mapsto\pi\act
y$. A \emph{finitely supported function} is a function with finite support under
this definition; this terminology is necessary as even where $X,Y$ are nominal
sets not all functions between them have this property. In the particular case
that $f$ has empty support, we call it \emph{equivariant}. This is equivalent to
the condition that $\pi\act(f(x))=f(\pi\act x)$ for all permutations $\pi$.

\section{Nominal Equational Logic}

This section presents syntax and proof rules for Nominal Equational Logic (NEL)
\cite{Clouston:Nominal}. In fact it is sometimes useful to mildly generalise NEL
so that its sorts form a nominal set, rather than a set, of sorts, as is done in
\cite{Clouston:Lawvere}. However this does not materially effect the results of
this paper and so the simpler original presentation is here used.

\begin{definition}\label{def:sig}
A \emph{NEL-signature} $\Sig$ is specified by
\begin{enumerate}
\item
  a set $\Sort_\Sig$, whose elements are called the \emph{sorts of $\Sig$};
\item
  a nominal set $\Op_\Sig$, whose elements are called the \emph{operation
  symbols of $\Sig$};
\item
  an equivariant \emph{typing function} mapping each operation symbol $\op\in
  \Op_{\Sig}$ to a \emph{type} consisting of a finite list $\vec{\sort}=(\sort_1
  \ldots,\sort_n)$ of sorts of $\Sig$ and another $\sort\in\Sort_\Sig$. We write
  this $\op:\vec{\sort}\arrow\sort$. Where $n=0$ we write $\op:\sort$.
  Equivariance of the typing function means that each $\op,\pi\act\op$ have the
  same type.
\end{enumerate}
\end{definition}

\begin{example}\label{ex:lam}
A NEL-signature for the untyped $\lambda$-calculus can be defined by letting
our sorts be the singleton $\{\tm\}$ and operation symbols be
\begin{equation*}
  \{\op[var]_a \mid a\in\Atom\} \cup \{\op[lam]_a \mid a\in\Atom\} \cup
    \{\op[app]\}
\end{equation*}
representing object-level variables, lambda-abstractions and application
respectively. The $\Perm$-action on these operations symbols is
\begin{equation*}
  \pi\act\op[var]_a\defeq\op[var]_{\pi(a)},\quad
    \pi\act\op[lam]_a\defeq\op[lam]_{\pi(a)},\quad
    \pi\act\op[app]\defeq\op[app]\enspace.
\end{equation*}
The typing function is
\begin{equation*}
  \op[var]_a:\tm,\quad \op[lam]_a:(\tm)\to\tm,\quad
    \op[app]:(\tm,\tm)\to\tm\enspace.
\end{equation*}
\end{example}

\begin{definition}
Fix a countably infinite set $\Var$ of \emph{variables}. Then the \emph{terms}
over $\Sig$ are
\begin{equation*}
  t ::= \pi\,x \mid \op\,t\cdots t
\end{equation*}
for $\pi\in\Perm$, $x\in\Var$ and $\op\in\Op_{\Sig}$. We call $\pi\,x$ a
\emph{suspension} and write $\idp\,x$ simply as $x$. We call $\op\,t_1\cdots
t_n$ a \emph{constructed term}.

The \emph{sorting environments} $\SE_{\Sig}$ are partial functions $\se:\Var
\pfun\Sort_{\Sig}$ with finite domain. We define the set
$\Term{\Sig}{\sort}{\se}$ of \emph{terms of sort $\sort$ in $\se$} by
\begin{enumerate}
\item
  if $\pi\in\Perm$ and $x\in\dom(\se)$ then $\pi\,x\in\Term{\Sig}{\se(x)}{\se}$;
\item
  if $\op:(\sort_1,\ldots,\sort_n)\arrow\sort$ and $t_i\in
    \Term{\Sig}{\sort_i}{\se}$ for $1\leq i\leq n$, then $\op\,t_1\cdots t_n\in
    \Term{\Sig}{\sort}{\se}$.
\end{enumerate}
\end{definition}

The \emph{object-level $\Perm$-action} on terms, $(\pi,t\in
\Term{\Sig}{\sort}{\se})\mapsto\pi\acts t\in\Term{\Sig}{\sort}{\se}$, is
\begin{equation}
  \label{eq:nel1}
  \begin{array}{rcl}
    \pi\acts(\pi'\,x) &\defeq& \pi\pi'\,x\enspace; \\
    \pi\acts(\op\,t_1\cdots t_n) &\defeq& (\pi\act\op)(\pi\acts t_1)\cdots
      (\pi\acts t_n)\enspace.
  \end{array}
\end{equation}
This action is used in the definition of substitution: given $\se,\se'\in
\SE_{\Sig}$, a \emph{substitution} $\sub:\se\to\se'$ is a map from each $x\in
\dom(\se)$ to $\sub(x)\in\Term{\Sig}{\se(x)}{\se'}$. Given a term $t\in
\Term{\Sig}{\sort}{\se}$, the term $t\{\sub\}\in\Term{\Sig}{\sort}{\se'}$ is
defined by
\begin{equation}\label{eq:nel11}
  \begin{array}{rcl}
    (\pi\,x)\{\sub\} &\defeq& \pi\acts\sub(x)\enspace; \\
    (\op\,t_1\cdots t_n)\{\sub\} &\defeq&
      \op\,t_1\{\sub\}\cdots t_n\{\sub\}\enspace.
  \end{array}
\end{equation}
We will write the single term substitution that replaces the variable $x$ with
the term $t$ and leaves all other variables unchanged as $(t/x)$.

Terms are not in general finitely supported under the $\Perm$-action
\eqref{eq:nel1}. However there is another notion of $\Perm$-action on terms
which has this property, so that each $\Term{\Sig}{\sort}{\se}$ is a nominal
set. The \emph{meta-level $\Perm$-action} on terms, $(\pi,t\in
\Term{\Sig}{\sort}{\se})\mapsto\pi\act t\in\Term{\Sig}{\sort}{\se}$, is
\begin{equation}
  \label{eq:nel1x}
  \begin{array}{rcl}
    \pi\act(\pi'\,x) &\defeq& \pi\pi'\pi^{-1}\,x\enspace; \\
    \pi\act(\op\,t_1\cdots t_n) &\defeq& (\pi\act\op)(\pi\act t_1)\cdots
      (\pi\act t_n)\enspace.
  \end{array}
\end{equation}
The following Lemma relates these notions:

\begin{lemma}\label{Lem:act_s}
Given $t\in\Term{\Sig}{\sort}{\se}$, $\pi\in\Perm$ and a substitution $\sub$,
\begin{enumerate}
\item
  $\pi\acts(t\{\sub\})=(\pi\acts t)\{\sub\}$;
\item
  $\pi\act t = \pi\acts t\{\pi^{-1}-\}$, where $(\pi^{-1}-)$ is the
  substitution mapping each $x\mapsto\pi^{-1}\,x$.
\end{enumerate}
\end{lemma}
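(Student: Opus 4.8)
Both parts are proved by structural induction on $t$, appealing only to the defining clauses \eqref{eq:nel1}, \eqref{eq:nel11} and \eqref{eq:nel1x}.

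For part (i), the plan is: in the base case $t=\pi'\,x$, compute the left-hand side as $\pi\acts((\pi'\,x)\{\sub\})=\pi\acts(\pi'\acts\sub(x))$ using \eqref{eq:nel11}, and the right-hand side as $(\pi\acts(\pi'\,x))\{\sub\}=(\pi\pi'\,x)\{\sub\}=(\pi\pi')\acts\sub(x)$ using \eqref{eq:nel1} and then \eqref{eq:nel11}; these agree because $\acts$ is a genuine $\Perm$-action on each $\Term{\Sig}{\sort}{\se}$, i.e.\ $\pi\acts(\pi'\acts s)=(\pi\pi')\acts s$. That auxiliary fact is itself a one-line induction on $s$ (the suspension case is just left multiplication of permutations; the constructed case uses that $\act$ is a $\Perm$-action on $\Op_\Sig$), and I would state it explicitly as a preliminary. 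In the constructed case $t=\op\,t_1\cdots t_n$, both sides commute past the operation symbol by \eqref{eq:nel1} and \eqref{eq:nel11}, reducing to the induction hypotheses for $t_1,\ldots,t_n$.

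For part (ii), again induct on $t$. The base case $t=\pi'\,x$ unwinds on the left to $\pi\act(\pi'\,x)=\pi\pi'\pi^{-1}\,x$ by \eqref{eq:nel1x}, and on the right to $(\pi\acts(\pi'\,x))\{\pi^{-1}-\}=(\pi\pi'\,x)\{\pi^{-1}-\}=(\pi\pi')\acts(\pi^{-1}\,x)=\pi\pi'\pi^{-1}\,x$, using \eqref{eq:nel1} and \eqref{eq:nel11}; so the conjugation in the meta-level action is exactly reproduced by left-multiplying by $\pi$ via $\acts$ and then feeding in the substitute term $\pi^{-1}\,x$. The constructed case commutes past $\op$ on both sides (by \eqref{eq:nel1x} and \eqref{eq:nel11}) and closes under the induction hypotheses. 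Note part (ii) does not require part (i).

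I do not expect a genuine obstacle here; the two things to be careful about are (a) recording that $\acts$, like $\act$, really is a $\Perm$-action on terms, since part (i)'s base case silently relies on it, and (b) keeping track of which sorting environment each term inhabits: $\pi\acts(-)$ and $\pi\act(-)$ preserve the environment, whereas $(-)\{\sub\}$ carries $\Term{\Sig}{\sort}{\se}$ to $\Term{\Sig}{\sort}{\se'}$, so the equalities in both parts hold in $\Term{\Sig}{\sort}{\se'}$.
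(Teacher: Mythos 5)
Your proposal is correct and follows exactly the route the paper intends: the paper's proof is just ``easy inductions on the structure of $t$'' (with citations to prior work), and your explicit inductions, including the preliminary observation that $\acts$ is a genuine $\Perm$-action used in the suspension case of part (i), fill in those details faithfully.
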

\begin{proof}
Easy inductions on the structure of $t$; see
\cite[Lem 5.2 \& (30)]{Clouston:Nominal} or \cite[Lem. 2.3]{Gabbay:One}.
\end{proof}

The \emph{freshness environments} $\FE_{\Sig}$ are partial functions $\fe$ with
finite domain on $\Var$, mapping each $x\in\dom(\fe)$ to a pair $(\as,\sort)$
where $\as\in\Pow_{fin}(\Atom)$ and $\sort\in\Sort_{\Sig}$. $\FE_{\Sig}$ is then a
nominal set under the action $(\pi\act\fe)(x)=(\pi\act\as,\sort)$; $\supp(\fe)$
is $\bigcup_{x\in\dom(\fe)}\supp(\fe(x))$. If $\fe(x_i)=(\as_i,\sort_i)$ for $1
\leq i\leq n$ we write $\fe$ as
\begin{equation}
  \label{eq:nel13}
  (\as_1\freshfor x_1:\sort_1,\ldots,\as_n\freshfor x_n:\sort_n)\enspace.
\end{equation}
The intended meaning is that $\as_i$ is fresh for $x_i$, which has sort
$\sort_i$. These will capture the freshness side conditions we discussed in the
introduction. Each $\fe\in\FE_{\Sig}$ gives rise to a sorting environment
$\seof{\fe}\in\SE_{\Sig}$ by taking the second projection. We will abbreviate
$\{a\}\freshfor x:\sort$ as $a\freshfor x:\sort$ and $\emptyset\freshfor x:
\sort$ as $x:\sort$.

\begin{definition}\label{def:NELjud}
A \emph{NEL-judgement} has the form
\begin{equation}
  \label{eq:nel18}
  \fe\ent \as\freshfor t\eq t':\sort
\end{equation}
where $\fe\in\FE_{\Sig}$, $\as\in\Pow_{fin}(\Atom)$, $\sort\in\Sort_{\Sig}$ and
$t,t'\in\Term{\Sig}{\sort}{\seof{\fe}}$. We will abbreviate $\fe\ent\as
\freshfor t\eq t:\sort$ as $\fe\ent\as\freshfor t:\sort$ and $\fe\ent\emptyset
\freshfor t\eq t':\sort$ as $\fe\ent t\eq t':\sort$.

A \emph{NEL-theory} $\Th$ is a collection of such judgements, which we call its
\emph{axioms}.
\end{definition}

\begin{example}\label{ex:lam2}
The axioms for  $\alpha\beta\eta$-equivalence over the untyped
$\lambda$-calculus (Ex. \ref{ex:lam}), adapting \cite[Ex. 2.15]{Gabbay:Nominal},
are
\begin{description}
  \item[($\alpha$)]
  $\;(x:\tm)\ent a\freshfor \op[lam]_a\,x:\tm$
  \item[($\beta_1$)]
  $(a\freshfor x:\tm,y:\tm)\ent \op[app]\,(\op[lam]_a\,x)\,y \eq x:\tm$
  \item[($\beta_2$)]
  $(y:\tm)\ent \op[app]\,(\op[lam]_a\,\op[var]_a)\, y \eq y:\tm$
  \item[($\beta_3$)]
  $(x:\tm,b\freshfor y:\tm)\ent \op[app]\,(\op[lam]_a\,(\op[lam]_b\,x))\,y \eq
    \op[lam]_b\,(\op[app]\,(\op[lam]_a\,x)\,y) :\tm$
  \item[($\beta_4$)]
  $(x_1:\tm,x_2:\tm,y:\tm)\ent \op[app]\,(\op[lam]_a\,(\op[app]\,x_1\,x_2))\,y
    \eq\op[app]\,(\op[app]\,(\op[lam]_a\,
    x_1)\,y)\,(\op[app]\,(\op[lam]_a\,x_2)\,y):\tm$
  \item[($\beta_5$)]
  $(b\freshfor x:\tm)\ent \op[app]\,(\op[lam]_a\,x)\,\op[var]_{b}\eq
    \swap{a}{b}\,x:\tm$
  \item[($\eta$)]
  $\;(a\freshfor x:\tm)\ent \op[lam]_a\,(\op[app]\,x\,\op[var]_a)\eq x:\tm
    \enspace.$
\end{description}
\end{example}

\begin{definition}\label{Def:logc}
  [Logical Consequence]
The set of \emph{theorems} of a NEL-theory $\Th$ is the least set of judgements
containing the axioms of $\Th$ and closed under the rules of Fig.
\ref{Fig:rulnel}. We write
\begin{equation*}
  \fe\ent_{\Th} \as\freshfor t \eq t':\sort
\end{equation*}
to indicate that the judgement is a theorem of $\Th$.

Fig. \ref{Fig:rulnel} uses the following new pieces of notation:
\begin{itemize}
\item
  In \ruleref{weak} the relation $\fe\leq\fe'$ holds if $\dom(\fe)\subseteq\dom
  (\fe')$ and for all $x\in\dom(\fe)$ we have $\fe(x)=(\as,\sort)$ and $\fe'(x)=
  (\as',\sort)$ so that $\as\subseteq\as'$.
\item
  In rule \ruleref{subst}
  \begin{equation}
    \label{eq:pro1}
    \fe'\ent\sub\eq\sub':\fe
  \end{equation}
  stands for the hypotheses $\fe'\ent\as_i\freshfor\sub(x_i)\eq\sub'(x_i):
  \sort_i$ for $1\leq i\leq n$, where $\fe$ is as \eqref{eq:nel13}.
\item
  In \ruleref{atm-intro} and \ruleref{atm-elim}, if $\as$ is a finite set of
  atoms and $\fe$ is as \eqref{eq:nel13} then
  \begin{equation*}
    \fe^{\sff\as}\defeq (\as_1\cup\as\freshfor x_1:\sort_1, \ldots,\,
      \as_n\cup\as\freshfor x_n:\sort_n)\enspace.
  \end{equation*}
\end{itemize}
Note also that \ruleref{atm-intro} and \ruleref{atm-elim} carry side conditions
relating to freshness. These do not refer to the freshness connective
$\freshfor$ internal to the logic. Rather, they refer to the
not-in-the-support-of relation $\nsupp$ of Def. \ref{def:nom} over the nominal
sets $\FE_{\Sig}$, $\Pow_{fin}(\Atom)$, and $\Term{\Sig}{\sort}{\seof{\fe}}$ with
respect to the action \eqref{eq:nel1x}.
\end{definition}

\begin{figure}
  \begin{mathpar}
    \inferrule*[left=(refl),right={$\fe\in\FE_{\Sig},
                                    t\in\Term{\Sig}{\sort}{\seof{\fe}}$}]%
    {%
      \ }{%
      \fe\ent t\eq t:\sort }
    \and
    \inferrule*[left=(symm)]%
    {%
      \fe\ent \as\freshfor t\eq t':\sort }{%
      \fe\ent \as\freshfor t'\eq t:\sort } \and
    \inferrule*[left=(trans)]%
    {%
      \fe\ent \as_1\freshfor t\eq t':\sort\\
      \fe\ent \as_2\freshfor t'\eq t'':\sort }{%
      \fe\ent (\as_1\cup\as_2)\freshfor t\eq t'':\sort } \and
    \inferrule*[left=(weak),right={$\fe\leq\fe'\in\FE_{\Sig}$}]%
    {
      \fe\ent\as\freshfor t\eq t':\sort}{%
      \fe'\ent\as\freshfor t\eq t':\sort} \and
    \inferrule*[left=(subst),
    right={$\sub,\sub':\seof{\fe}\to\seof{(\fe')}$}]%
    {%
      \fe'\ent \sub\eq\sub':\fe\\
      \fe\ent \as\freshfor t\eq t':\sort}{%
      \fe'\ent \as\freshfor t\{\sub\}\eq t'\{\sub'\}:\sort} \and
    \inferrule*[left=(atm-intro),right={$\as\nsupp(\as',t,t')$}]%
    {%
      \fe\ent \as'\freshfor t\eq t':\sort}{%
      \fe^{\sff\as}\ent \as'\cup\as\freshfor t\eq t':\sort } \and
    \inferrule*[left=(atm-elim),right={$\as\nsupp(\fe,\as',t,t')$}]%
    {%
      \fe^{\sff\as}\ent \as'\freshfor t\eq t':\sort}{%
      \fe\ent \as'\freshfor t\eq t':\sort } \and
    \inferrule*[left={($\freshfor$-equivar)}]%
    {%
      \ }{%
      (\as\freshfor x:\sort)\ent \pi\act\as\freshfor\pi\,x:\sort} \and 
    \inferrule*[left={(susp)}]%
    {%
      \ }{%
      (\ds(\pi,\pi')\freshfor x:\sort)\ent\pi\,x\eq \pi'\,x:\sort }
  \end{mathpar}
  \caption{Proof rules for NEL}
  \label{Fig:rulnel}
\end{figure}

In \cite{Clouston:Nominal} semantics are given for NEL, in which sorts are
interpreted as nominal sets and operation symbols as finitely supported
functions between them. The proof rules of Fig. \ref{Fig:rulnel} are shown to be
sound and complete for that semantics. In this paper, however, we will work
purely in terms of NEL's proof theory.

The next Lemma shows how freshness judgements may be translated into equivalent
equational judgements in NEL. This will be crucial to the results of the next
section, where we will get rid of freshness judgements entirely.

\begin{lemma}\label{lem:frsheqeq}
Given $t\in\Term{\Sig}{\sort}{\seof{\fe}}$ and $\as\in\Pow_{fin}(\Atom)$,
\begin{equation*}
  \fe\ent_{\Th}\as\freshfor t:\sort \;\bimp\;
    \fe^{\sff\supp(\vec{a}')}\ent_{\Th} t\eq\swap{\vec{a}}{\vec{a}'}\acts t:
    \sort
\end{equation*}
where $\vec{a}\in\Atom^{(n)}$ is an ordering of $\as$ and $\vec{a}'\in
\Atom^{(n)}$ is a tuple of the same size such that $\supp(\vec{a}')\nsupp(\fe,
\as,t)$.
\end{lemma}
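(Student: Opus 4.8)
The statement is an equivalence, and the plan is to prove the two implications separately, reducing both to one preliminary observation: the rules (susp) and ($\freshfor$-equivar), though stated only for a variable carrying a suspended permutation, can be \emph{lifted} to arbitrary terms by means of (subst). Concretely, given $t\in\Term{\Sig}{\sort}{\seof{\fe}}$, permutations $\pi,\pi'$, and a variable $y\notin\dom(\fe)$, let $\sub$ be the substitution sending $y\mapsto t$. Because $(\pi\,y)\{\sub\}=\pi\acts\sub(y)=\pi\acts t$ by the definition of substitution \eqref{eq:nel11}, feeding the (susp)-instance $(\ds(\pi,\pi')\freshfor y:\sort)\ent\pi\,y\eq\pi'\,y:\sort$ into (subst) along $\sub$ yields: $\fe\ent_{\Th}\ds(\pi,\pi')\freshfor t:\sort$ implies $\fe\ent_{\Th}\pi\acts t\eq\pi'\acts t:\sort$. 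The key is that the object-level action packaged inside $t\{\sub\}$ carries the permutations through the \emph{whole} of $t$ — including the head operation symbol of a constructed term, since $\pi\acts(\op\,t_1\cdots t_n)=(\pi\act\op)(\pi\acts t_1)\cdots(\pi\acts t_n)$. The same move applied to the ($\freshfor$-equivar)-instance $(\as_0\freshfor y:\sort)\ent\pi\act\as_0\freshfor\pi\,y:\sort$ shows that $\fe\ent_{\Th}\as_0\freshfor t:\sort$ implies $\fe\ent_{\Th}\pi\act\as_0\freshfor\pi\acts t:\sort$.

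For the forward implication I would start from $\fe\ent_{\Th}\as\freshfor t:\sort$. Writing $\vec{a}=(a_1,\dots,a_n)$ and $\vec{a}'=(a'_1,\dots,a'_n)$, the hypothesis $\supp(\vec{a}')\nsupp\as$ makes the underlying sets disjoint, so $\swap{\vec{a}}{\vec{a}'}$ is defined and moves exactly $a_1,\dots,a_n,a'_1,\dots,a'_n$; hence $\ds(\idp,\swap{\vec{a}}{\vec{a}'})=\as\cup\supp(\vec{a}')$. Applying (atm-intro) with the atom set $\supp(\vec{a}')$, whose side condition $\supp(\vec{a}')\nsupp(\as,t)$ is contained in the hypotheses, gives $\fe^{\sff\supp(\vec{a}')}\ent_{\Th}(\as\cup\supp(\vec{a}'))\freshfor t:\sort$. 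Now the lifted (susp) with $\pi=\idp$, $\pi'=\swap{\vec{a}}{\vec{a}'}$ produces $\fe^{\sff\supp(\vec{a}')}\ent_{\Th}\idp\acts t\eq\swap{\vec{a}}{\vec{a}'}\acts t:\sort$, and since $\idp\acts t=t$ this is the desired judgement.

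For the converse I would start from $\fe^{\sff\supp(\vec{a}')}\ent_{\Th}t\eq\swap{\vec{a}}{\vec{a}'}\acts t:\sort$. From (refl) we have $\fe\ent_{\Th}t\eq t:\sort$, which (atm-intro) with $\supp(\vec{a}')$ (side condition $\supp(\vec{a}')\nsupp t$) promotes to $\fe^{\sff\supp(\vec{a}')}\ent_{\Th}\supp(\vec{a}')\freshfor t:\sort$. Since $\swap{\vec{a}}{\vec{a}'}$ sends each $a'_i$ to $a_i$, we have $\swap{\vec{a}}{\vec{a}'}\act\supp(\vec{a}')=\as$, so the lifted ($\freshfor$-equivar) with $\pi=\swap{\vec{a}}{\vec{a}'}$ gives $\fe^{\sff\supp(\vec{a}')}\ent_{\Th}\as\freshfor\swap{\vec{a}}{\vec{a}'}\acts t:\sort$. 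Two applications of (trans), the second preceded by (symm) on the hypothesis, then chain this freshness judgement along the provable equation $t\eq\swap{\vec{a}}{\vec{a}'}\acts t$ to yield $\fe^{\sff\supp(\vec{a}')}\ent_{\Th}\as\freshfor t:\sort$. Finally (atm-elim) with $\supp(\vec{a}')$, whose side condition $\supp(\vec{a}')\nsupp(\fe,\as,t)$ is again part of the hypotheses, removes the auxiliary atoms and leaves $\fe\ent_{\Th}\as\freshfor t:\sort$.

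The only step that calls for an idea rather than bookkeeping is the lifting observation — noticing that substitution's built-in object-level action is precisely what carries a suspension-style rule through an arbitrary term. After that, the work is just confirming that the freshness side conditions of (atm-intro) and (atm-elim) line up with the stated conditions on $\vec{a}'$, plus the two elementary identities $\ds(\idp,\swap{\vec{a}}{\vec{a}'})=\as\cup\supp(\vec{a}')$ and $\swap{\vec{a}}{\vec{a}'}\act\supp(\vec{a}')=\as$.
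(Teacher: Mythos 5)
Your proposal is correct and follows essentially the same route as the paper's proof: both directions use \ruleref{atm-intro} to introduce the fresh atoms $\supp(\vec{a}')$, then push an instance of \ruleref{susp} (respectively \ruleref{$\freshfor$-equivar}) through the term via \ruleref{subst} and the object-level action \eqref{eq:nel11}, and close the converse with \ruleref{trans}, \ruleref{symm} and \ruleref{atm-elim}. Your ``lifting'' observation is exactly the paper's combination of these suspension rules with \ruleref{subst}, merely stated once as a reusable step.
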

\begin{proof}
Left-to-right: $\fe^{\sff\supp(\vec{a}')}\ent\as\cup\supp(\vec{a}')\freshfor t:
\sort$ by \ruleref{atm-intro}; $(\as\cup\supp(\vec{a}')\freshfor x:\sort)\ent x
\eq\swap{\vec{a}}{\vec{a}'}\,x:\sort$ by \ruleref{susp}; the result then follows
by \ruleref{subst} and \eqref{eq:nel11}.

Right-to-left: $\fe^{\sff\supp(\vec{a}')}\ent\supp(\vec{a}')\freshfor t:\sort$ by
\ruleref{refl} and \ruleref{atm-intro}; $(\supp(\vec{a}')\freshfor x:\sort)\ent
\as\freshfor\swap{\vec{a}}{\vec{a}'}\,x:\sort$ by \ruleref{$\freshfor$-equivar};
then $\fe^{\sff\supp(\vec{a}')}\ent\as\freshfor\swap{\vec{a}}{\vec{a}'}\acts t:
\sort$ by \ruleref{subst}. Our hypothesis along with \ruleref{trans} and
\ruleref{symm} gives us $\fe^{\sff\supp(\vec{a}')}\ent\as\freshfor t:\sort$; the
result then follows by \ruleref{atm-elim}.
\end{proof}

\begin{example}\label{ex:lam3}
The rule for $\alpha$-equivalence for the untyped $\lambda$-calculus in Ex.
\ref{ex:lam2}
$$
(x:\tm)\ent a\freshfor \op[lam]_a\,x:\tm
$$
is equivalent to
\begin{equation}
  \label{eq:nel5}
(b\freshfor x:\tm)\ent \op[lam]_a\,x\eq\op[lam]_b\,\swap{a}{b}\,x:\tm\enspace.
\end{equation}
\end{example}

\section{Nominal Equation-only Logic}\label{sec:neol}

This section presents syntax and proof rules for NEL without freshness
connectives to the right of the turnstile $\ent$. We call this Nominal
Equation-only Logic (NEoL), and show that it is just as expressive as NEL.

Note that the previously published version of NEoL \cite{Clouston:Lawvere}
included a rule called \ruleref{perm} that was somewhat unwieldy. This paper
improves the presentation of NEoL by replacing \ruleref{perm} with a special
case \ruleref{susp}, and then derives \ruleref{perm} as Lem. \ref{susp->perm}.

\begin{definition}\label{def:nele}
A \emph{NEoL-judgement} has the form
\begin{equation*}
  \fe\ent t\eq t':\sort
\end{equation*}
where $\fe\in\FE_\Sig$, $\sort\in\Sort_{\Sig}$ and $t,t'\in
\Term{\Sig}{\sort}{\seof{\fe}}$. Note that NEoL-judgements are also
NEL-judgements (Def. \ref{def:NELjud}).

A \emph{NEoL-theory} $\Th$ is a collection of such judgements, called its
axioms.
\end{definition}

\begin{definition}\label{def:neol_logc}
The set of \emph{theorems} of a NEoL-theory $\Th$ is the least set of judgements
containing the axioms of $\Th$ and closed under the rules of Fig.
\ref{Fig:rulneol}. We write
\begin{equation*}
  \fe\ento_{\Th} t \eq t':\sort
\end{equation*}
to indicate that the judgement is a theorem of $\Th$.

Say $\fe$ is as \eqref{eq:nel13}. Then the rule \ruleref{subst$^o$} in Fig.
\ref{Fig:rulneol} uses the following new pieces of notation (ref.
\eqref{eq:pro1} and Lem. \ref{lem:frsheqeq}):
\begin{itemize}
\item
  $\fe'\ent\sub\eq\sub'$ stands for the hypotheses $\fe'\ent\sub(x_i)\eq\sub'
  (x_i):\sort_i$ for $1\leq i\leq n$;
\item
  $\fe'\ent\sub:\fe$ stands for the hypotheses
  \begin{equation}
    \label{eq:pro51}
    (\fe')^{\sff\supp(\vec{a}'_i)}\ent\sub(x)\eq
      \swap{\vec{a}_i}{\vec{a}'_i}\acts\sub(x):\sort\enspace.
  \end{equation}
  for $1\leq i\leq n$, where $\vec{a}_i\in\Atom^{(n)}$ is an ordering of $\as_i$
  and $\vec{a}'_i\in\Atom^{(n)}$ is a tuple of the same size such that $\supp
  (\vec{a}'_i)\nsupp(\fe',\as_i,\sub(x))$.
\end{itemize}
\end{definition}

\begin{figure}
  \begin{mathpar}
    \inferrule*[left=(refl),right={$\fe\in\FE_{\Sig},
                                    t\in\Term{\Sig}{\sort}{\seof{\fe}}$}]%
    {%
      \ }{%
      \fe\ent t\eq t:\sort }
    \and
    \inferrule*[left=(symm$^o$)]%
    {%
      \fe\ent t\eq t':\sort }{%
      \fe\ent t'\eq t:\sort } \and
    \inferrule*[left=(trans$^o$)]%
    {%
      \fe\ent t\eq t':\sort\\
      \fe\ent t'\eq t'':\sort }{%
      \fe\ent t\eq t'':\sort } \and
    \inferrule*[left=(weak$^o$),right={$\fe\leq\fe'\in\FE_{\Sig}$}]%
    {
      \fe\ent t\eq t':\sort}{%
      \fe'\ent t\eq t':\sort} \and
    \inferrule*[left=(subst$^o$),
    right={$\sub,\sub':\seof{\fe}\to\seof{(\fe')}$}]%
    {%
      \fe'\ent \sub\eq\sub'\\
      \fe'\ent \sub:\fe \\
      \fe\ent t\eq t':\sort}{%
      \fe'\ent t\{\sub\}\eq t'\{\sub'\}:\sort} \and
    \inferrule*[left=(atm-elim$^o$),right={$\as\nsupp(\fe,t,t')$}]%
    {%
      \fe^{\sff\as}\ent t\eq t':\sort}{%
      \fe\ent t\eq t':\sort } \and
    \inferrule*[left={(susp)}]%
    {%
      \ }{%
      (\ds(\pi,\pi')\freshfor x:\sort)\ent\pi\,x\eq \pi'\,x:\sort }
  \end{mathpar}
  \caption{Proof rules for NEoL}
  \label{Fig:rulneol}
\end{figure}

\begin{theorem}\label{thm:neol_to_nel}
If $\Th$ is a NEoL-theory (and hence a NEL-theory) then $\fe\ento_{\Th}t\eq t':
\sort$ implies $\fe\ent_{\Th}t\eq t':\sort$.
\end{theorem}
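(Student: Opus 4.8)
The plan is to induct on the derivation of $\fe\ento_{\Th}t\eq t':\sort$, reading each NEoL-judgement $\fe\ent t\eq t':\sort$ as the NEL-judgement $\fe\ent\emptyset\freshfor t\eq t':\sort$, and checking that every rule of Fig.~\ref{Fig:rulneol} is admissible in NEL. The base case is immediate: since NEoL-judgements are NEL-judgements (Def.~\ref{def:nele}), an axiom of $\Th$ qua NEoL-theory is an axiom of $\Th$ qua NEL-theory. For the inductive step, six of the seven rules reduce to their NEL namesakes with empty freshness sets to the right of $\ent$: \ruleref{refl} and \ruleref{susp} are already rules of Fig.~\ref{Fig:rulnel}, while \ruleref{symm$^o$}, \ruleref{trans$^o$}, \ruleref{weak$^o$} and \ruleref{atm-elim$^o$} are the instances of \ruleref{symm}, \ruleref{trans}, \ruleref{weak} and \ruleref{atm-elim} in which the freshness sets $\as$, $\as_1$, $\as_2$, $\as'$ appearing to the right of $\ent$ are all taken to be $\emptyset$ (using $\emptyset\cup\emptyset=\emptyset$ for \ruleref{trans}, and noting that the side condition $\as\nsupp(\fe,t,t')$ of \ruleref{atm-elim$^o$} is the side condition of \ruleref{atm-elim} with $\as'=\emptyset$). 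In each of these cases the inductive hypothesis turns the NEoL-derivations of the premises into NEL-derivations, and one application of the matching NEL rule closes the case.

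The only case demanding real work is \ruleref{subst$^o$}, where I would rebuild the single hypothesis $\fe'\ent\sub\eq\sub':\fe$ of NEL's \ruleref{subst} out of the two hypotheses $\fe'\ent\sub\eq\sub'$ and $\fe'\ent\sub:\fe$ of \ruleref{subst$^o$}. Write $\fe$ as in \eqref{eq:nel13}. By the inductive hypothesis, the premises $\fe'\ent\sub(x_i)\eq\sub'(x_i):\sort_i$ give $\fe'\ent_{\Th}\sub(x_i)\eq\sub'(x_i):\sort_i$, and the premises \eqref{eq:pro51} give $(\fe')^{\sff\supp(\vec{a}'_i)}\ent_{\Th}\sub(x_i)\eq\swap{\vec{a}_i}{\vec{a}'_i}\acts\sub(x_i):\sort_i$. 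Since the conditions imposed on $\vec{a}_i$ and $\vec{a}'_i$ in Def.~\ref{def:neol_logc} are exactly those under which Lem.~\ref{lem:frsheqeq} applies, that lemma converts the latter into $\fe'\ent_{\Th}\as_i\freshfor\sub(x_i):\sort_i$. Combining this with $\fe'\ent_{\Th}\sub(x_i)\eq\sub'(x_i):\sort_i$ by \ruleref{trans} (and $\as_i\cup\emptyset=\as_i$) yields $\fe'\ent_{\Th}\as_i\freshfor\sub(x_i)\eq\sub'(x_i):\sort_i$ for every $i$, which is precisely $\fe'\ent\sub\eq\sub':\fe$ in the sense of \eqref{eq:pro1}. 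The inductive hypothesis on the remaining premise gives $\fe\ent_{\Th}\emptyset\freshfor t\eq t':\sort$, and a single application of NEL's \ruleref{subst} delivers $\fe'\ent_{\Th}\emptyset\freshfor t\{\sub\}\eq t'\{\sub'\}:\sort$, i.e. the desired $\fe'\ent_{\Th}t\{\sub\}\eq t'\{\sub'\}:\sort$.

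The main obstacle is thus confined to the \ruleref{subst$^o$} case, and within it to making the invocation of Lem.~\ref{lem:frsheqeq} line up. The delicate point---that the auxiliary atoms $\vec{a}'_i$ appearing in the premises \eqref{eq:pro51} satisfy the side condition $\supp(\vec{a}')\nsupp(\fe,\as,t)$ that Lem.~\ref{lem:frsheqeq} requires---is already built into the definition of $\fe'\ent\sub:\fe$ in Def.~\ref{def:neol_logc}, so no renaming or freshening of atoms is needed and the argument goes through directly. Everything else is routine rule-chasing, and the whole proof is short.
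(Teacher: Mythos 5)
Your proposal is correct and follows essentially the same route as the paper: check each NEoL rule is admissible in NEL, with \ruleref{refl} and \ruleref{susp} shared, the four ``o'' rules as empty-freshness instances, and \ruleref{subst$^o$} handled by converting the hypotheses \eqref{eq:pro51} into freshness judgements via Lem.~\ref{lem:frsheqeq}. Your only addition is to spell out the small \ruleref{trans} step combining $\fe'\ent_{\Th}\as_i\freshfor\sub(x_i):\sort_i$ with $\fe'\ent_{\Th}\sub(x_i)\eq\sub'(x_i):\sort_i$ to recover \eqref{eq:pro1}, which the paper leaves implicit when it calls \ruleref{subst$^o$} a special case of \ruleref{subst}.
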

\begin{proof}
We need only check that each of the rules for NEoL of Fig. \ref{Fig:rulneol} can
be derived from the rules for NEL of Fig. \ref{Fig:rulnel}. \ruleref{refl} and
\ruleref{susp} are also rules of Fig. \ref{Fig:rulnel}, while
\ruleref{symm$^o$}, \ruleref{trans$^o$}, \ruleref{weak$^o$} and
\ruleref{atm-elim$^o$} are clearly special cases of the corresponding rules.
\ruleref{subst$^o$} is a special case of \ruleref{subst}, as \eqref{eq:pro51} is
equivalent to the usual condition $\fe'\ent\as_i\nsupp\sub(x_i):\sort$ by Lem.
\ref{lem:frsheqeq}.
\end{proof}

The next three lemmas relate logical consequence for NEoL (Def.
\ref{def:neol_logc}) with the $\Perm$-actions on terms \eqref{eq:nel1} and
\eqref{eq:nel1x}.

\begin{lemma}\label{Lem:Th^oacts}
Given a NEoL-theory $\Th$, $\fe\ento_{\Th}t\eq t':\sort$ implies $\fe\ento_{\Th}
\pi\acts t\eq\pi\acts t':\sort$.
\end{lemma}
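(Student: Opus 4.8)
The obvious approach is induction on the derivation of $\fe\ento_{\Th}t\eq t':\sort$. For \ruleref{refl}, \ruleref{symm$^o$}, \ruleref{trans$^o$} and \ruleref{weak$^o$} one just re-applies the rule to the $\acts$-images of the premises; for \ruleref{susp} one re-applies it after noting $\ds(\pi\pi_1,\pi\pi_2)=\ds(\pi_1,\pi_2)$, by injectivity of $\pi$; and for \ruleref{subst$^o$} one applies the induction hypothesis to the equational premise $\fe\ento_{\Th}t\eq t':\sort$ and re-applies \ruleref{subst$^o$} with the same substitutions, using Lemma \ref{Lem:act_s}(i) to rewrite $(\pi\acts t)\{\sub\}$ as $\pi\acts(t\{\sub\})$. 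The difficulty is \ruleref{atm-elim$^o$}: from $\fe^{\sff\as}\ento_{\Th}\pi\acts t\eq\pi\acts t':\sort$ one would like to conclude $\fe\ento_{\Th}\pi\acts t\eq\pi\acts t':\sort$, but this needs the side condition $\as\nsupp(\fe,\pi\acts t,\pi\acts t')$, and although $\as\nsupp(\fe,t,t')$ holds, the (meta-level) support of $\pi\acts t$ can pick up atoms of $\supp(\pi)$ that are not in $\supp(t)$, so the side condition is not re-established. (There is also the base case: an axiom $\fe\ento_{\Th}t\eq t'$ gives no immediate purchase on $\fe\ento_{\Th}\pi\acts t\eq\pi\acts t'$.)

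To get around both points, I would factor $\acts$ through the meta-level action \eqref{eq:nel1x}. By Lemma \ref{Lem:act_s}(ii), $\pi\acts t=(\pi\act t)\{\pi-\}$ where $(\pi-)$ is the substitution $x\mapsto\pi\,x$ (since $(\pi^{-1}-)$ followed by $(\pi-)$ is the identity substitution). So the statement reduces to: (a) NEoL-theorems of $\Th$ are closed under \eqref{eq:nel1x}, i.e.\ $\fe\ento_{\Th}t\eq t':\sort$ implies $\pi\act\fe\ento_{\Th}\pi\act t\eq\pi\act t':\sort$ for the action $\pi\act\fe$ on $\FE_\Sig$; and (b) \ruleref{subst$^o$} applied once to $\pi\act\fe\ento_{\Th}\pi\act t\eq\pi\act t':\sort$ with $\sub=\sub'=(\pi-)$ yields $\fe\ento_{\Th}(\pi\act t)\{\pi-\}\eq(\pi\act t')\{\pi-\}:\sort$, which is the claim. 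Statement (a) is the same induction as above, but now the troublesome case \ruleref{atm-elim$^o$} goes through: since $\pi\act(\fe^{\sff\as})=(\pi\act\fe)^{\sff\pi(\as)}$, the induction hypothesis gives $(\pi\act\fe)^{\sff\pi(\as)}\ento_{\Th}\pi\act t\eq\pi\act t':\sort$, and \ruleref{atm-elim$^o$} with the atom set $\pi(\as)$ applies because its side condition $\pi(\as)\nsupp(\pi\act\fe,\pi\act t,\pi\act t')$ follows from $\as\nsupp(\fe,t,t')$ by equivariance of freshness (Lemma \ref{lem:somefacts}(i)); the base case uses that the axioms of $\Th$ are closed under \eqref{eq:nel1x}, as is standard in nominal logic.

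The remaining obstacle is verifying the freshness side-conditions of the single use of \ruleref{subst$^o$} in (b). Since $\pi\act\fe$ and $\fe$ give the same sorting environment, $(\pi-)$ is a legitimate substitution $\seof{(\pi\act\fe)}\to\seof\fe$; the premise $\fe\ento_{\Th}(\pi-)\eq(\pi-)$ is just \ruleref{refl}, and the premise $\fe\ento_{\Th}(\pi-):(\pi\act\fe)$ unwinds, via \eqref{eq:pro51}, to one judgement $\fe^{\sff\supp(\vec{a}'_i)}\ento_{\Th}\pi\,x_i\eq\swap{\vec{a}_i}{\vec{a}'_i}\acts(\pi\,x_i):\sort_i$ for each variable $x_i$ with $\fe(x_i)=(\as_i,\sort_i)$, where $\vec{a}_i$ orders $\pi(\as_i)$ and $\vec{a}'_i$ is an equinumerous tuple with $\supp(\vec{a}'_i)\nsupp(\fe,\pi(\as_i),\pi\,x_i)$. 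As $\swap{\vec{a}_i}{\vec{a}'_i}\acts(\pi\,x_i)=\swap{\vec{a}_i}{\vec{a}'_i}\pi\,x_i$, rule \ruleref{susp} proves this in the context $(\ds(\pi,\swap{\vec{a}_i}{\vec{a}'_i}\pi)\freshfor x_i:\sort_i)$; and choosing $\vec{a}'_i$ also disjoint from $\supp(\pi)$ — harmless, as only finitely many atoms are barred — makes $\ds(\pi,\swap{\vec{a}_i}{\vec{a}'_i}\pi)=\pi^{-1}\bigl(\pi(\as_i)\cup\supp(\vec{a}'_i)\bigr)=\as_i\cup\supp(\vec{a}'_i)$, so that \ruleref{weak$^o$} places the judgement in the required context $\fe^{\sff\supp(\vec{a}'_i)}$. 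This discharges the premises and completes the reduction.
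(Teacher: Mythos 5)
There is a genuine gap, and it sits exactly where your argument leans hardest: the base case of the induction you use for statement (a). You justify it by saying ``the axioms of $\Th$ are closed under \eqref{eq:nel1x}, as is standard in nominal logic'', but in this paper a NEoL-theory is \emph{any} collection of judgements (Def.~\ref{def:nele}); no equivariance of the axiom set is assumed, and the lemma is stated and later used (e.g.\ for the theories $\Th^o$ of Def.~\ref{Def:Th^o}, built from arbitrary NEL-theories) for theories that need not be closed under the meta-level action. So your induction only proves (a) for equivariant theories. Statement (a) is in fact true for arbitrary $\Th$ --- it is Lem.~\ref{Lem:Th^oact} --- but in the paper it is \emph{derived from} the present lemma (object-level action plus a substitution by $(\pi^{-1}-)$), so patching your base case by appealing to it would be circular; and patching it by applying the object-level trick axiom-by-axiom just re-proves the lemma directly and makes the whole induction, and the detour through the meta-level action, unnecessary. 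Your step (b), discharging the premises of \ruleref{subst$^o$} for the substitution $(\pi-)$, is fine as far as it goes (and mirrors the paper's proof of Lem.~\ref{Lem:Th^oact}), but it rests on (a).

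The idea you are missing is that no induction on derivations is needed at all: use the hypothesis $\fe\ento_{\Th}t\eq t':\sort$ not as the judgement to be transformed but as the premise $\fe\ent\sub\eq\sub'$ of a single application of \ruleref{subst$^o$}, with $\sub=(t/x)$, $\sub'=(t'/x)$, applied to the \ruleref{refl} instance $(x:\sort)\ent\pi\,x\eq\pi\,x:\sort$. The remaining premise $\fe\ent(t/x):(x:\sort)$ is trivial because the freshness environment $(x:\sort)$ carries the empty atom set, so the corresponding instance of \eqref{eq:pro51} is just reflexivity; and by \eqref{eq:nel11} the conclusion $\fe\ent\pi\,x\{t/x\}\eq\pi\,x\{t'/x\}:\sort$ is literally $\fe\ent\pi\acts t\eq\pi\acts t':\sort$. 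This one-step argument never touches the axioms or the structure of the derivation, which is why it works for an arbitrary theory; your (correct) observation that \ruleref{atm-elim$^o$} blocks a naive induction is thus moot rather than something to be engineered around.
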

\begin{proof}
\begin{equation*}
\inferrule*[left=(subst$^o$)]
  { \fe\ent t\eq t':\sort \\
    \fe\ent \{t/x\}:(x:\sort) \\
    (x:\sort)\ent \pi\,x\eq\pi\,x:\sort }
  { \fe\ent \pi\,x\{t/x\}\eq\pi\,x\{t'/x\}:\sort }
\end{equation*}
\end{proof}

\begin{lemma}\label{Lem:Th^oact}
Given a NEoL-theory $\Th$, $\fe\ento_{\Th}t\eq t':\sort$ implies $\pi\act\fe
\ento_{\Th}\pi\act t\eq\pi\act t':\sort$.
\end{lemma}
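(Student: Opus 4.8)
The plan is to reduce the meta-level action to the object-level action already handled by Lem.~\ref{Lem:Th^oacts}, using the bridging identity $\pi\act t=(\pi\acts t)\{\pi^{-1}-\}$ of Lem.~\ref{Lem:act_s}(ii), where $(\pi^{-1}-)$ is the substitution sending each variable $x$ to the suspension $\pi^{-1}\,x$. Concretely: from $\fe\ento_{\Th}t\eq t':\sort$, Lem.~\ref{Lem:Th^oacts} yields $\fe\ento_{\Th}\pi\acts t\eq\pi\acts t':\sort$; I would then apply \ruleref{subst$^o$} along $\sub=\sub'=(\pi^{-1}-):\seof{\fe}\to\seof{(\pi\act\fe)}$, whose conclusion $\pi\act\fe\ento_{\Th}(\pi\acts t)\{\pi^{-1}-\}\eq(\pi\acts t')\{\pi^{-1}-\}:\sort$ is, by a second use of Lem.~\ref{Lem:act_s}(ii), exactly the desired $\pi\act\fe\ento_{\Th}\pi\act t\eq\pi\act t':\sort$.

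It then remains to discharge the premises of that \ruleref{subst$^o$} step. Its equational premise is the output of Lem.~\ref{Lem:Th^oacts}, and $\pi\act\fe\ent(\pi^{-1}-)\eq(\pi^{-1}-)$ is just \ruleref{refl} applied to each $\pi^{-1}\,x_i$. The work lies in the freshness premise $\pi\act\fe\ent(\pi^{-1}-):\fe$: writing $\fe$ as in \eqref{eq:nel13}, this unfolds to \eqref{eq:pro51}, i.e.\ for each $i$, with $\vec{a}_i$ an ordering of $\as_i$ and any admissible fresh $\vec{a}'_i$, a derivation of $(\pi\act\fe)^{\sff\supp(\vec{a}'_i)}\ento_{\Th}\pi^{-1}\,x_i\eq\swap{\vec{a}_i}{\vec{a}'_i}\acts(\pi^{-1}\,x_i):\sort_i$. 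Since $\swap{\vec{a}_i}{\vec{a}'_i}\acts(\pi^{-1}\,x_i)=\swap{\vec{a}_i}{\vec{a}'_i}\pi^{-1}\,x_i$ by \eqref{eq:nel1}, this is an instance of \ruleref{susp}, followed by \ruleref{weak$^o$} to reinstate the other entries of the environment, provided the disagreement set $\ds(\pi^{-1},\swap{\vec{a}_i}{\vec{a}'_i}\pi^{-1})$ is contained in the set $(\pi\act\as_i)\cup\supp(\vec{a}'_i)$ that $(\pi\act\fe)^{\sff\supp(\vec{a}'_i)}$ assigns to $x_i$.

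I expect this disagreement-set bound to be the one genuinely delicate point. A direct computation gives $\ds(\pi^{-1},\swap{\vec{a}_i}{\vec{a}'_i}\pi^{-1})=\{a\mid\pi^{-1}(a)\in\supp(\swap{\vec{a}_i}{\vec{a}'_i})\}=\pi\act\bigl(\as_i\cup\supp(\vec{a}'_i)\bigr)$, which a priori equals $(\pi\act\as_i)\cup(\pi\act\supp(\vec{a}'_i))$ and so need not lie inside $(\pi\act\as_i)\cup\supp(\vec{a}'_i)$. The resolution is to exploit the constraint the side condition of \eqref{eq:pro51} already places on $\vec{a}'_i$: it must be fresh for $\sub(x_i)=\pi^{-1}\,x_i$, whose support under the meta-level action \eqref{eq:nel1x} is $\supp(\pi^{-1})=\supp(\pi)$; hence $\supp(\vec{a}'_i)\cap\supp(\pi)=\emptyset$, so $\pi$ fixes $\supp(\vec{a}'_i)$ pointwise and $\pi\act\supp(\vec{a}'_i)=\supp(\vec{a}'_i)$. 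Then $\ds(\pi^{-1},\swap{\vec{a}_i}{\vec{a}'_i}\pi^{-1})=(\pi\act\as_i)\cup\supp(\vec{a}'_i)$ exactly, and \ruleref{susp} applies. The remaining ingredients are routine: $\supp(\pi'\,x)=\supp(\pi')$ for a suspension under \eqref{eq:nel1x}, and $\supp(\swap{\vec{a}_i}{\vec{a}'_i})=\as_i\cup\supp(\vec{a}'_i)$ for a generalised transposition whose two legs are disjoint, which holds since $\vec{a}_i$ enumerates $\as_i$ and $\vec{a}'_i$ is in particular fresh for $\as_i$.
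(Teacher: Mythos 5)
Your proposal is correct and is essentially the paper's own proof: both reduce to Lem.~\ref{Lem:Th^oacts} via the identity $\pi\act t=(\pi\acts t)\{\pi^{-1}-\}$ of Lem.~\ref{Lem:act_s}(ii) and a single application of \ruleref{subst$^o$} with the substitution $(\pi^{-1}-)$, discharging the freshness premise \eqref{eq:pro51} by \ruleref{susp} and \ruleref{weak$^o$}. Your explicit disagreement-set computation, using that the side condition forces $\supp(\vec{a}'_i)\cap\supp(\pi)=\emptyset$, simply makes precise the ``suitably fresh'' choice that the paper leaves implicit.
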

\begin{proof}
By Lem. \ref{Lem:act_s}(ii) this result may be attained via \ruleref{susp$^o$}:
\begin{equation*}
  \inferrule
    {\pi\act\fe\ent(\pi^{-1}-):\fe \\
     \fe\ent\pi\acts t\eq \pi\acts t':\sort}
    {\pi\act\fe\ent(\pi\acts t)\{\pi^{-1}-\}\eq
      (\pi\acts t')\{\pi^{-1}-\}:\sort}
\end{equation*}
The second premise follows by Lem. \ref{Lem:Th^oacts}. Now take $\fe$ as
\eqref{eq:nel13} and for $1\leq i\leq n$ let $\vec{a}_i$ be an ordering of
$\as_i$ and $\vec{a}'_i$ be a suitably fresh tuple of the same size. Then $(\pi
\act\as_i\cup\supp(\vec{a}'_i)\freshfor x_i:\sort_i)\ent\pi^{-1}\,x_i\eq
\swap{\vec{a}_i}{\vec{a}'_i}\pi^{-1}\,x_i:\sort_i$ for each $i$ by
\ruleref{susp}; applying \ruleref{weak$^o$} gives us $(\pi\act\fe)^{\sff\supp
(\vec{a}'_i)}\ent\pi^{-1}\,x_i\eq\swap{\vec{a}_i}{\vec{a}'_i}\pi^{-1}\,x_i:
\sort_i$, which yields the first premise.
\end{proof}

\begin{lemma}\label{susp->perm}
Given $\fe\in\FE_{\Sig}$, $t\in\Term{\Sig}{\sort}{\seof{\fe}}$, and $\ds(\pi,
\pi')\nsupp t$,
\begin{equation*}
\fe^{\sff\ds(\pi,\pi')}\ento_{\Th}\pi\acts t\eq \pi'\acts t:\sort
\end{equation*}
\end{lemma}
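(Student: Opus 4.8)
The plan is to reduce to the case $\pi=\idp$ and then argue by structural induction on $t$. For the reduction, set $\tau\defeq\pi^{-1}\pi'$, so that $\supp(\tau)=\ds(\pi,\pi')$ and, the object-level action being a left action, $\pi\acts(\tau\acts t)=(\pi\tau)\acts t=\pi'\acts t$. It thus suffices to establish
\begin{equation*}
  \fe^{\sff\supp(\tau)}\ento_\Th t\eq\tau\acts t:\sort\qquad\text{whenever }\supp(\tau)\nsupp t,
\end{equation*}
since applying Lemma~\ref{Lem:Th^oacts} with the permutation $\pi$ to this judgement yields precisely $\fe^{\sff\ds(\pi,\pi')}\ento\pi\acts t\eq\pi'\acts t:\sort$. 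Informally, a permutation whose support is fresh for $t$ acts on $t$ trivially up to provable equality.

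The base case $t=\rho\,x$ is where the real work lies, and is the main obstacle. Here $\tau\acts t=\tau\rho\,x$, and \ruleref{susp} yields $(\ds(\rho,\tau\rho)\freshfor x:\sort)\ento\rho\,x\eq\tau\rho\,x:\sort$. One computes $\ds(\rho,\tau\rho)=\rho^{-1}(\supp(\tau))$; the point is that this set in fact equals $\supp(\tau)$. Since the meta-level support of the suspension $\rho\,x$ is the ordinary permutation support $\supp(\rho)$, the hypothesis $\supp(\tau)\nsupp\rho\,x$ forces $\supp(\tau)\cap\supp(\rho)=\emptyset$, so $\rho$ fixes every atom of $\supp(\tau)$ pointwise and hence $\rho^{-1}(\supp(\tau))=\supp(\tau)$. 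As $x\in\dom(\fe)$ and $\supp(\tau)$ is contained in the freshness set $\fe^{\sff\supp(\tau)}$ assigns to $x$, \ruleref{weak$^o$} then upgrades this to $\fe^{\sff\supp(\tau)}\ento\rho\,x\eq\tau\rho\,x:\sort$. Rule \ruleref{susp} only ever hands us the disagreement set $\ds(\rho,\tau\rho)$, so recognising that the freshness hypothesis collapses it to $\supp(\tau)$ is the crux.

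For the inductive step $t=\op\,t_1\cdots t_n$, the meta-level support of a constructed term is the union of $\supp(\op)$ with the supports of its immediate subterms, so $\supp(\tau)$ is fresh for $\op$ and for every $t_i$. Freshness of $\supp(\tau)=\ds(\idp,\tau)$ for $\op$ gives $\tau\act\op=\idp\act\op=\op$ by Lemma~\ref{lem:somefacts}(ii), so $\tau\acts t=\op\,(\tau\acts t_1)\cdots(\tau\acts t_n)$; the induction hypotheses are $\fe^{\sff\supp(\tau)}\ento t_i\eq\tau\acts t_i:\sort_i$. These are combined by a congruence step: from $(y_1:\sort_1,\ldots,y_n:\sort_n)\ento\op\,y_1\cdots y_n\eq\op\,y_1\cdots y_n:\sort$ (by \ruleref{refl}), apply \ruleref{subst$^o$} with $\sub(y_i)=t_i$ and $\sub'(y_i)=\tau\acts t_i$; the premises $\fe^{\sff\supp(\tau)}\ent\sub:(y_1:\sort_1,\ldots,y_n:\sort_n)$ hold by \ruleref{refl} because every freshness set of the source environment is empty, and the premises $\fe^{\sff\supp(\tau)}\ent\sub\eq\sub'$ are the induction hypotheses. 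The conclusion is $\fe^{\sff\supp(\tau)}\ento t\eq\tau\acts t:\sort$, completing the induction. The auxiliary facts used above --- that the support of $\rho\,x$ is $\supp(\rho)$, and that the support of a constructed term is the union of the supports of its constituents --- are established by routine inductions on terms.
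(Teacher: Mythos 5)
Your proof is correct and follows essentially the same route as the paper: structural induction on $t$, handling suspensions via \ruleref{susp} (using the freshness hypothesis to show the relevant disagreement set is unchanged, your $\ds(\rho,\tau\rho)=\supp(\tau)$ being exactly the paper's $\ds(\pi\pi'',\pi'\pi'')=\ds(\pi,\pi')$) plus \ruleref{weak$^o$}, and constructed terms via a \ruleref{subst$^o$} congruence step with Lem.~\ref{lem:somefacts}(ii) giving equality of the permuted operation symbols. The only difference is your cosmetic preliminary reduction to the case $\tau=\pi^{-1}\pi'$ versus $\idp$ via Lem.~\ref{Lem:Th^oacts}, which is harmless and does not change the substance of the argument.
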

\begin{proof}
By induction on the structure of $t$.

Suspensions: Say $t=\pi''\,x$. By our freshness assumption $\ds(\pi,\pi')=
\ds(\pi\pi'',\pi'\pi'')$, so by \ruleref{susp} $\ds(\pi,\pi')\freshfor x:\sort
\ent\pi\pi''\,x\eq\pi'\pi''\,x:\sort$; the result then follows by
\ruleref{weak$^o$}.

Constructed terms: if $t=\op\,t_1\cdots$ then $\pi\acts t=((\pi\act\op)\,x_1
\cdots)\{\sub\}$, where $\sub$ maps each $x_i\mapsto\pi\acts t_i$. Similarly
$\pi'\acts t=((\pi'\act\op)\,x_1\cdots)\{\sub'\}$, where $\sub'$ maps each $x_i\
\mapsto\pi'\acts t_i$. $\pi\act\op=\pi'\act\op$ by our freshness assumption and
Lem. \ref{lem:somefacts}(ii). To apply \ruleref{subst$^o$} to get our result we
need only then show $\fe^{\sff\ds(\pi,\pi')}\ent\sub\eq\sub'$; or for each $i$,
$\fe^{\sff\ds(\pi,\pi')}\ent\pi\acts t_i\eq\pi'\acts t_i:\sort_i$. These judgements
follow by induction.
\end{proof}

It is a fact about NEL that from $\fe\ent_{\Th}\as\freshfor t:\sort$ we can
infer $\fe\ent_{\Th}\as'\freshfor t:\sort$ for $\as'\subseteq\as$. The next
Lemma gives the corresponding result for NEoL.

\begin{lemma}\label{Lem:throw_fresh}
Suppose we have an NEoL-theory $\Th$, freshness environment $\fe\in\FE_{\Sig}$,
term $t\in\Term{\Sig}{\sort}{\seof{\fe}}$ and lists of atoms $\vec{a},\vec{b}
\in\Atom^{(n)}$ such that $\supp(\vec{b})\nsupp(\vec{a},t)$. Now suppose that
$\vec{a}',\vec{b}'\in\Atom^{(m)}$ for some $m\leq n$, with $\supp(\vec{a}')
\subseteq\supp(\vec{a})$ and $\supp(\vec{b}')\subseteq\supp(\vec{b})$. Then
\begin{equation*}
  \fe^{\sff\supp(\Vec{b})}\ento_{\Th}t\eq\swap{\vec{a}}{\vec{b}\,}\acts t:\sort
    \;\imp\; \fe^{\sff\supp(\Vec{b}')}\ento_{\Th}t\eq
    \swap{\vec{a}'}{\vec{b}'}\acts t:\sort
\end{equation*}
\end{lemma}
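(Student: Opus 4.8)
The goal is to pass from a freshness-as-equation statement about the "big" generalised transposition $\swap{\vec{a}}{\vec{b}}$ to the analogous statement for the "small" sub-tuples $\vec{a}', \vec{b}'$. My plan is to run this as a two-stage reduction: first show we may shrink $\vec{b}$ (the fresh atoms) down to $\vec{b}'$, then show we may shrink $\vec{a}$ down to $\vec{a}'$, each time producing the corresponding smaller generalised transposition. The key observation throughout is that $\swap{\vec{a}}{\vec{b}}\acts t$ and $\swap{\vec{a}'}{\vec{b}'}\acts t$ differ only by permutations whose disagreement set is fresh for $t$, so Lemma~\ref{susp->perm} (the derived \ruleref{perm} rule) lets us replace one by the other inside a NEoL-provable equation, after suitable weakening.

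In more detail: note that since $\supp(\vec{a}') \subseteq \supp(\vec{a})$ and $\supp(\vec{b}') \subseteq \supp(\vec{b})$, the two generalised transpositions agree except on atoms drawn from $\supp(\vec{a}) \cup \supp(\vec{b})$ that are \emph{not} moved by $\swap{\vec{a}'}{\vec{b}'}$; crucially, every atom on which they disagree lies in $\supp(\vec{a}) \cup \supp(\vec{b})$, and by hypothesis $\supp(\vec{b}) \nsupp t$ while the remaining disagreement atoms come from $\supp(\vec{a})$. The subtle point is that an atom $a_j \in \supp(\vec{a}) \setminus \supp(\vec{a}')$ need not be fresh for $t$ — but it is moved by $\swap{\vec{a}}{\vec{b}}$ to some $b_j \in \supp(\vec{b})$, which \emph{is} fresh for $t$, so the composite $\swap{\vec{a}'}{\vec{b}'}^{-1}\swap{\vec{a}}{\vec{b}}$ has disagreement set contained in $\supp(\vec{b}) \cup \swap{\vec{a}}{\vec{b}}^{-1}(\supp(\vec{b}))$-type atoms; by Lemma~\ref{lem:somefacts}(i) applied to $\supp(\vec{b}) \nsupp t$, this whole set is fresh for $t$. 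Hence by Lemma~\ref{susp->perm}, working in the environment $\fe^{\sff \supp(\vec{b})}$ (which contains enough freshness, since $\supp(\vec{b}') \subseteq \supp(\vec{b})$ and $\supp(\vec{b})$ already includes the disagreement atoms), we get $\fe^{\sff\supp(\vec{b})} \ento_\Th \swap{\vec{a}}{\vec{b}}\acts t \eq \swap{\vec{a}'}{\vec{b}'}\acts t : \sort$. Combining this with the hypothesis via \ruleref{trans$^o$} and \ruleref{symm$^o$}, together with \ruleref{refl} for $t \eq t$, yields $\fe^{\sff\supp(\vec{b})} \ento_\Th t \eq \swap{\vec{a}'}{\vec{b}'}\acts t : \sort$.

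The remaining step is to weaken the environment from $\fe^{\sff\supp(\vec{b})}$ down to $\fe^{\sff\supp(\vec{b}')}$, which goes the "wrong way" for \ruleref{weak$^o$} — we are removing freshness assumptions, not adding them — so this is where \ruleref{atm-elim$^o$} must do the work. The side condition of \ruleref{atm-elim$^o$} requires that the atoms being eliminated, namely $\supp(\vec{b}) \setminus \supp(\vec{b}')$, satisfy $\nsupp(\fe, t, \swap{\vec{a}'}{\vec{b}'}\acts t)$: they are fresh for $\fe$ and $t$ by the hypothesis $\supp(\vec{b}) \nsupp (\vec{a}, t)$ combined with $\supp(\vec{b})$ being chosen disjoint from $\supp(\fe)$ (as is standard in these arguments — one may first rename $\vec{b}$ to a fully fresh tuple, or observe the hypothesis already forces disjointness from $\dom(\fe)$'s support), and they are fresh for $\swap{\vec{a}'}{\vec{b}'}\acts t$ by Lemma~\ref{lem:somefacts}(i) since $\swap{\vec{a}'}{\vec{b}'}$ fixes all of $\supp(\vec{b}) \setminus \supp(\vec{b}')$ (these atoms are not among $\vec{a}'$ or $\vec{b}'$). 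Applying \ruleref{atm-elim$^o$} then delivers exactly $\fe^{\sff\supp(\vec{b}')} \ento_\Th t \eq \swap{\vec{a}'}{\vec{b}'}\acts t : \sort$, as required.

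I expect the main obstacle to be bookkeeping the disagreement set of $\swap{\vec{a}'}{\vec{b}'}^{-1}\swap{\vec{a}}{\vec{b}}$ carefully enough to verify it is genuinely fresh for $t$: one must track, index by index, whether $a_j$ (resp.\ $b_j$) survives into $\vec{a}'$ (resp.\ $\vec{b}'$), and check that in every case where the two permutations disagree the relevant atom is either in $\supp(\vec{b})$ or is the $\swap{\vec{a}}{\vec{b}}$-image of an atom in $\supp(\vec{b})$, hence fresh for $t$ by equivariance of $\nsupp$. The cleanest way to organise this may be to reduce $\vec{b} \to \vec{b}'$ first (pure fresh-atom shrinking, handled entirely by \ruleref{atm-elim$^o$} plus a \ruleref{perm} step for the change of transposition) and only then reduce $\vec{a} \to \vec{a}'$; splitting into these two steps keeps each disagreement-set computation small. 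Everything else — the applications of \ruleref{trans$^o$}, \ruleref{symm$^o$}, \ruleref{refl}, \ruleref{weak$^o$} — is routine.
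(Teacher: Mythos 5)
Your pivotal step is unjustified. You obtain $\fe^{\sff\supp(\vec{b})}\ento_{\Th}\swap{\vec{a}}{\vec{b}}\acts t\eq\swap{\vec{a}'}{\vec{b}'}\acts t:\sort$ ``by Lemma~\ref{susp->perm}'', but that lemma requires $\ds(\swap{\vec{a}}{\vec{b}},\swap{\vec{a}'}{\vec{b}'})\nsupp t$, and this disagreement set --- the set you describe as $\supp(\vec{b})\cup\swap{\vec{a}}{\vec{b}}^{-1}(\supp(\vec{b}))$, which is just $\supp(\vec{a})\cup\supp(\vec{b})$ --- contains atoms of $\supp(\vec{a})$, and nothing in the hypotheses makes those atoms fresh for $t$ in the $\nsupp$ sense. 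That is exactly the point of the lemma: the atoms of $\vec{a}$ are only \emph{provably} fresh for $t$, via the hypothesised equation, and may well lie in $\supp(t)$. Your appeal to Lem.~\ref{lem:somefacts}(i) proves the wrong statement: from $\supp(\vec{b})\nsupp t$ it yields $\supp(\vec{a})\nsupp\swap{\vec{a}}{\vec{b}}\act t$, not $\supp(\vec{a})\nsupp t$. Concretely, in the $\lambda$-calculus theory of Ex.~\ref{ex:lam2} take $t=\op[lam]_a\,\op[var]_a$ with $\fe$ empty, $\vec{a}=(a)$, $\vec{b}=(b)$ and $m=0$: the hypothesis of the lemma holds, but $\ds(\swap{a}{b},\idp)=\{a,b\}$ with $a\in\supp(t)$, so Lem.~\ref{susp->perm} is inapplicable. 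Nor can this step come from a purely support-based argument: given the hypothesis it is interderivable (via \ruleref{trans$^o$} and \ruleref{symm$^o$}, up to weakening) with the conclusion you are trying to prove, so it must use the hypothesis in an essential way, which your justification does not. Splitting into a $\vec{b}$-shrinking stage and an $\vec{a}$-shrinking stage does not help, because the $\vec{a}$-stage faces the same problem.

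There is a second gap at your final \ruleref{atm-elim$^o$} step: its side condition needs $(\supp(\vec{b})\setminus\supp(\vec{b}'))\nsupp\fe$, which the lemma does not assume ($\supp(\vec{b})$ is only required fresh for $\vec{a}$ and $t$, and it certainly need not avoid $\supp(\fe)$); the renaming you gesture at is not free, since transporting the hypothesis with Lem.~\ref{Lem:Th^oact} replaces $\fe$ by $\swap{\vec{b}}{\vec{c}}\act\fe$ whenever $\supp(\vec{b})$ meets $\supp(\fe)$. The paper's proof takes a different route that bypasses both problems: it derives $(\supp(\vec{a})\cup\supp(\vec{b}')\freshfor x:\sort)\ent x\eq\swap{\vec{a}'}{\vec{b}'}\,x:\sort$ by \ruleref{susp} and substitutes $t$ for $x$ using \ruleref{subst$^o$}, so the conclusion lands directly in $\fe^{\sff\supp(\vec{b}')}$ with no elimination step; the freshness premise \eqref{eq:pro52a} of that substitution, itself an equation involving doubly fresh tuples $\vec{c},\vec{c}\,'$, is then discharged by combining the hypothesis (transported along a fresh renaming via Lem.~\ref{Lem:Th^oact}) with Lem.~\ref{susp->perm} applied only to atoms that genuinely satisfy $\nsupp t$. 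That use of the hypothesis to discharge the freshness premise is the ingredient missing from your proposal.
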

\begin{proof}
$(\supp(\Vec{a})\cup\supp(\Vec{b}')\freshfor x:\sort)\ent x\eq
\swap{\Vec{a}'}{\Vec{b}'}\,x:\sort$ by \ruleref{susp}. We wish to use
\ruleref{subst$^o$} to conclude that $\fe^{\sff\supp(\Vec{b}')}\ento_{\Th}x\{t/
x\}\eq(\swap{\vec{a}'}{\vec{b}'}\,x)\{t/x\}:\sort$; for this substitution to
occur we must prove that
\begin{equation}
  \label{eq:pro52a}
  \fe^{\sff\supp(\Vec{b}')}\ent\{t/x\}:(\supp(\Vec{a})\cup\supp(\Vec{b}')
    \freshfor x:\sort)\enspace.
\end{equation}
Now take fresh $\vec{c}\in\Atom^{(n)}$, $\vec{c}\,'\in\Atom^{(m)}$. By Lem.
\ref{susp->perm} we have 
\begin{equation}
  \label{eq:pro52b}
  \fe^{\sff\supp(\Vec{b}')\cup\supp(\vec{c}\,)\cup\supp(\vec{c}\,')}\ent
    \swap{\vec{a}}{\vec{c}}\acts t\eq
    \swap{\vec{a}}{\vec{c}}\swap{\vec{b}'}{\vec{c}\,'}\acts t:\sort\enspace.
\end{equation}
Applying Lem. \ref{Lem:Th^oact} and \ruleref{weak$^o$} to our hypothesis gives
us
\begin{equation}
  \label{eq:pro52c}
  \fe^{\sff\supp(\Vec{b}')\cup\supp(\vec{c}\,)\cup\supp(\vec{c}\,')}\ent t\eq
    \swap{\vec{a}}{\vec{c}}\acts t:\sort\enspace.
\end{equation}
Combining \eqref{eq:pro52b} and \eqref{eq:pro52c} with \ruleref{trans$^o$} gives
us
\begin{equation*}
  \fe^{\sff\supp(\Vec{b}')\cup\supp(\vec{c}\,)\cup\supp(\vec{c}\,')}\ent t\eq
    \swap{\vec{a}}{\vec{c}}\swap{\vec{b}'}{\vec{c}\,'}\acts t:\sort
\end{equation*}
which is equivalent to \eqref{eq:pro52a} as required.
\end{proof}

\begin{definition}\label{Def:Th^o}
Given a NEL-theory $\Th$, let $\Th^o$ be the NEoL-theory produced by replacing
each axiom of the form \eqref{eq:nel18} by the axioms
\begin{equation}
  \label{eq:pro54}
  \fe\ent t \eq t':\sort\text{ and }
    \fe^{\sff\supp(\Vec{a}')}\ent t\eq \swap{\Vec{a}}{\Vec{a}'}\acts t:\sort
\end{equation}
where $\vec{a}\in\Atom^{(n)}$ is an ordering of $\as$ and $\vec{a}'\in
\Atom^{(n)}$ is a tuple of the same size such that $\supp(\vec{a}')\nsupp(\fe,
\as,t)$. \eqref{eq:nel18} and \eqref{eq:pro54} are equivalent for NEL by Lem.
\ref{lem:frsheqeq}.
\end{definition}

\begin{theorem}\label{thm:big}
Let $\Th$ be a NEL-theory. Then $\fe\ent_{\Th}\as\freshfor t\eq t':\sort$
implies that $\fe\ento_{\Th^o}t\eq t':\sort$ and $\fe^{\sff\supp(\Vec{a}')}
\ento_{\Th^o}t\eq\swap{\Vec{a}}{\Vec{a}'}\acts t:\sort$, where $\Vec{a}\in
\Atom^{(n)}$ is an ordering of $\as$ and $\Vec{a}'\in\Atom^{(n)}$ is a tuple of
the same size such that $\supp(\Vec{a}')\nsupp(\fe,\as,t)$.

Therefore by Lem. \ref{lem:frsheqeq} and Thm. \ref{thm:neol_to_nel} NEL and NEoL
are equivalent.
\end{theorem}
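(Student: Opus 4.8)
The plan is to prove the implication by induction on the derivation of $\fe\ent_{\Th}\as\freshfor t\eq t':\sort$ in NEL, treating each of the nine rules of Fig.~\ref{Fig:rulnel} together with the axiom base case in turn; the concluding equivalence of NEL and NEoL then follows by combining this with Thm.~\ref{thm:neol_to_nel} (which gives NEoL-provability $\Rightarrow$ NEL-provability) and Lem.~\ref{lem:frsheqeq} (which makes the axioms of $\Th$ and of $\Th^o$ mutually derivable in NEL, so provability in the two NEL-theories coincides). Two preliminary observations make the induction manageable. First, a \emph{change-of-witness} principle: if the second conclusion holds for one admissible tuple $\vec{a}'$ it holds for every other, since relabelling one fresh tuple to another by a generalised transposition is an instance of Lem.~\ref{Lem:Th^oact} (that transposition is supported on atoms fresh for $\fe$ and $t$, hence acts trivially on both); this lets me choose $\vec{a}'$ as fresh as convenient in each case. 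Second, a \emph{freshness-from-support} principle: whenever $\as\nsupp t$ genuinely holds in $\Term{\Sig}{\sort}{\seof{\fe}}$, Lem.~\ref{susp->perm} applied to $\swap{\vec{a}}{\vec{a}'}$ and $\idp$ (whose disagreement set is $\supp(\swap{\vec{a}}{\vec{a}'})=\as\cup\supp(\vec{a}')$) already yields $\fe^{\sff\as\cup\supp(\vec{a}')}\ento_{\Th^o}t\eq\swap{\vec{a}}{\vec{a}'}\acts t:\sort$, after \ruleref{symm$^o$}.

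The one recurring technical device is a \emph{union-of-freshness} step: from two NEoL-derivations in the shape of the second conclusion, one for $\as_1$ and one for $\as_2$ with respect to the same term $t$, I produce one for $\as_1\cup\as_2$. Ordering $\as_1\cup\as_2$ as $\vec{a}_1$ (for $\as_1$) followed by $\vec{c}$ (for $\as_2\setminus\as_1$), I first apply Lem.~\ref{Lem:throw_fresh} to shrink the $\as_2$-witness to a $\vec{c}$-witness $t\eq\swap{\vec{c}}{\vec{c}'}\acts t$, then apply Lem.~\ref{Lem:Th^oacts} with the permutation $\swap{\vec{a}_1}{\vec{a}'_1}$ (which, since $\acts$ is an action, sends this to $\swap{\vec{a}_1}{\vec{a}'_1}\acts t\eq(\swap{\vec{a}_1}{\vec{a}'_1}\swap{\vec{c}}{\vec{c}'})\acts t:\sort$), and finally chain with the $\as_1$-witness by \ruleref{trans$^o$} (after \ruleref{weak$^o$} to a common environment), using $\swap{\vec{a}_1}{\vec{a}'_1}\swap{\vec{c}}{\vec{c}'}=\swap{\vec{a}_1\vec{c}}{\vec{a}'_1\vec{c}'}$ for pairwise-disjoint tuples. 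I would also record that the second conclusion for $t$ together with the first conclusion delivers the second conclusion with $t$ replaced by $t'$ (apply Lem.~\ref{Lem:Th^oacts} to $\fe\ento t\eq t'$ and chain), so the pair of conclusions is symmetric in $t$ and $t'$.

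With these in place the case analysis is largely mechanical. The axiom case is the definition of $\Th^o$ (Def.~\ref{Def:Th^o}) followed by change-of-witness. \ruleref{refl} and \ruleref{susp} have $\as=\emptyset$, so the second conclusion is an instance of \ruleref{refl} and the first is \ruleref{refl} respectively \ruleref{susp}. \ruleref{symm} is the symmetry remark above. \ruleref{trans} and \ruleref{atm-intro} are the union step --- for \ruleref{atm-intro} one of the two inputs is manufactured from the side condition $\as\nsupp t$ by freshness-from-support. \ruleref{weak} is \ruleref{weak$^o$}, after first changing the witness to one fresh for the larger environment. \ruleref{atm-elim} follows from \ruleref{atm-elim$^o$} once one checks that the removed atoms are fresh for $\swap{\vec{a}}{\vec{a}'}\acts t$, which holds because $\supp(\swap{\vec{a}}{\vec{a}'}\acts t)\subseteq\supp(t)\cup\as'\cup\supp(\vec{a}')$. \ruleref{$\freshfor$-equivar} is a single use of \ruleref{susp} after choosing $\vec{a}'$ disjoint from $\supp(\pi)$, so that the relevant disagreement set collapses to exactly $\as\cup\supp(\vec{a}')$. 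The only substantial case is \ruleref{subst}: the inductive hypotheses on the $n$ freshness premises $\fe'\ent\as_i\freshfor\sub(x_i)\eq\sub'(x_i):\sort_i$ are precisely the hypotheses $\fe'\ent\sub\eq\sub'$ and $\fe'\ent\sub:\fe$ required by \ruleref{subst$^o$} (compare \eqref{eq:pro51}), so one application of \ruleref{subst$^o$} to the inductive hypothesis $\fe\ento t\eq t'$ gives the first conclusion; for the second one substitutes $\sub$ into the second conclusion for $t$, using Lem.~\ref{Lem:act_s}(i) to rewrite $(\swap{\vec{a}}{\vec{a}'}\acts t)\{\sub\}$ as $\swap{\vec{a}}{\vec{a}'}\acts(t\{\sub\})$, and a union step (joining the $\as_i$-freshness of $\sub(x_i)$ with the freshness of $\supp(\vec{a}')$ for $\sub(x_i)$, which is available once $\vec{a}'$ has been chosen fresh for $\sub$) to meet the side condition of \ruleref{subst$^o$} when the source environment is enlarged to $\fe^{\sff\supp(\vec{a}')}$.

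I expect the real work to lie in the freshness-environment bookkeeping of the \ruleref{trans}, \ruleref{atm-intro} and \ruleref{subst} cases: making the union step, and the ensuing applications of \ruleref{weak$^o$} and \ruleref{atm-elim$^o$}, land on exactly the environment $\fe^{\sff\supp(\vec{a}')}$ that the statement demands rather than on something merely equivalent to it, and choosing all the auxiliary fresh tuples so that every $\swap{\cdot}{\cdot}$ that appears really is a generalised transposition (its two arguments disjoint). Everything else reduces to routine manipulation with the NEoL rules and Lemmas~\ref{lem:somefacts}, \ref{Lem:act_s}, \ref{Lem:Th^oacts}, \ref{Lem:Th^oact}, \ref{susp->perm} and \ref{Lem:throw_fresh}.
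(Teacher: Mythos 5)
Your proposal is correct and takes essentially the same approach as the paper: the same induction over the rules of Fig.~\ref{Fig:rulnel} with the axioms handled by Def.~\ref{Def:Th^o}, relying on the same lemmas (Lem.~\ref{Lem:Th^oacts}, \ref{Lem:Th^oact}, \ref{susp->perm}, \ref{Lem:throw_fresh}); your ``union-of-freshness'' step is exactly the chain the paper performs inline in the \ruleref{trans} case, and your handling of \ruleref{atm-intro} and \ruleref{subst} builds the same composite generalised transpositions that the paper obtains via auxiliary \ruleref{subst$^o$}/\ruleref{susp} derivations together with Lem.~\ref{susp->perm}. The only difference is presentational: you make the change-of-witness (equivariance) bookkeeping explicit where the paper leaves the choice of suitably fresh tuples implicit.
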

\begin{proof}
Given a $\Th$-axiom the corresponding NEoL-judgement is a $\Th^o$-axiom by Def.
\ref{Def:Th^o}. The proof then proceeds by induction on the rules of Fig.
\ref{Fig:rulnel}, showing that the corresponding NEoL-judgements may be derived
by the rules of Fig. \ref{Fig:rulneol}.

This result is immediate for \ruleref{refl} and \ruleref{susp}, which are also
rules for NEoL. \ruleref{weak} and \ruleref{atm-elim} follow easily by
applications of \ruleref{weak$^o$} and \ruleref{atm-elim$^o$}.

\ruleref{symm}: Applying the induction hypothesis gives us $\fe\ent t\eq t':
\sort$ and $\fe^{\sff\supp(\vec{a}')}\ent t\eq\swap{\Vec{a}}{\Vec{a}'}\acts t:
\sort$. $\fe\ent t'\eq t:\sort$ by \ruleref{symm$^o$}. Now $\fe\ent
\swap{\vec{a}}{\vec{a}'}\acts t\eq\swap{\vec{a}}{\vec{a}'}\acts t':\sort$ by
Lem. \ref{Lem:Th^oacts}, so by \ruleref{weak$^o$} and \ruleref{trans$^o$} we
have $\fe^{\sff\supp(\vec{a}')}\ent t'\eq\swap{\Vec{a}}{\Vec{a}'}\acts t':\sort$.

\ruleref{trans}: The induction hypothesis gives us $\fe\ent t\eq t':\sort$,
$\fe^{\sff\supp(\vec{a}'_1)}\ent t\eq\swap{\vec{a}_1}{\vec{a}'_1}\acts t:\sort$,
$\fe\ent t'\eq t'':\sort$ and $\fe^{\sff\supp(\vec{a}'_2)}\ent t'\eq
\swap{\vec{a}_2}{\vec{a}'_2}\acts t':\sort$, where $\vec{a}_1,\vec{a}_2$ are
orderings of $\as_1,\as_2$ respectively, and $\vec{a}'_1,\vec{a}'_2$ are fresh
tuples of the same sizes. $\fe\ent t\eq t'':\sort$ by \ruleref{trans$^o$}. Now
suppose $\as_1\nsupp\as_2$ (if they are not disjoint we use Lem.
\ref{Lem:throw_fresh} to weaken one side until they are), and use successive
applications of \ruleref{weak$^o$}, Lem. \ref{Lem:Th^oacts} and
\ruleref{trans$^o$}:
\begin{equation*}
\begin{array}{rcl}
  \fe^{\sff\supp(\Vec{a}'_1)\cup\supp(\Vec{a}'_2)}\ento_{\Th}\,t &\eq&
    \swap{\vec{a}_1}{\vec{a}'_1}\acts t \\
  &\eq& \swap{\vec{a}_1}{\vec{a}'_1}\acts t' \\
  &\eq& \swap{\vec{a}_1}{\vec{a}'_1}\swap{\vec{a}_2}{\vec{a}'_2}\acts t' \\
  &\eq&
    \swap{\vec{a}_1}{\vec{a}'_1}\swap{\vec{a}_2}{\vec{a}'_2}\acts t\enspace.
\end{array}
\end{equation*}

\ruleref{subst}: By the induction hypothesis $\fe\ent t\eq t':\sort$ and
$\fe^{\sff\supp(\Vec{a}')}\ent t\eq\swap{\Vec{a}}{\Vec{a}'}\acts t:\sort$, and if
$\fe$ is as \eqref{eq:nel13} then for $1\leq i\leq n$ we have $\fe'\ent\sub(x_i)
\eq\sub'(x_i):\sort_i$ and $(\fe')^{\sff\supp(\vec{a}'_i)}\ent\sub(x_i)\eq
\swap{\vec{a}_i}{\vec{a}'_i}\acts\sub(x_i):\sort_i$ where $\vec{a}_i$ is an
ordering of $\as_i$ and $\vec{a}'_i$ is a fresh tuple of the same size. $\fe\ent
t\{\sub\}\eq t'\{\sub'\}:\sort$ by \ruleref{subst$^o$}. Now
$\swap{\vec{a}}{\vec{a}'}\acts(t\{\sub\})=(\swap{\vec{a}}{\vec{a}'}\acts t)
\{\sub\}$ by Lem. \ref{Lem:act_s}(i), so we look to apply \ruleref{subst$^o$}:
\begin{equation*}
  \inferrule
  { (\fe')^{\sff\supp(\vec{a}')}\ent\sub:\fe^{\sff\supp(\vec{a}')} \\
    \fe^{\sff\supp(\vec{a}')}\ent t\eq\swap{\Vec{a}}{\Vec{a}'}\acts t:\sort }
  { (\fe')^{\sff\supp(\vec{a}')}\ent t\{\sub\}\eq
      (\swap{\vec{a}}{\vec{a}'}\acts t)\{\sub\}:\sort }
\end{equation*}
The second premise is among our hypotheses, while the first follows from
\ruleref{subst$^o$} for each $i$:
\begin{equation*}
  \inferrule
  { \fe''\ent\sub(x_i)\eq\swap{\vec{a}_i}{\vec{a}'_i}\acts\sub(x_i):\sort_i \\
    \fe''\ent\{\sub(x_i)/x\}:
      (\supp(\vec{a}')\cup\supp(\vec{a}'')\freshfor x:\sort_i) \\
    (\supp(\vec{a}')\cup\supp(\vec{a}'')\freshfor x:\sort)\ent x\eq
      \swap{\vec{a}'}{\vec{a}''}\,x:\sort_i }
  { \fe''\ent x\{\sub(x_i)/x\}\eq\swap{\vec{a}'}{\vec{a}''}\,x
      \{\swap{\vec{a}_i}{\vec{a}'_i}\acts\sub(x_i)/x\}:\sort_i }
\end{equation*}
where $\fe''=(\fe')^{\sff\supp(\vec{a}')\,\cup\,\supp(\vec{a}'')\,\cup\,\supp
(\vec{a}_i')}$ and $\vec{a}''$ is a fresh copy of $\vec{a}'$. The first premise
here follows from our hypotheses and \ruleref{weak$^o$}; the second follows by
Lem. \ref{susp->perm}, and the third by \ruleref{susp}.

\ruleref{atm-intro}: By the induction hypothesis $\fe\ent t\eq t':\sort$ and
$\fe^{\sff\supp(\vec{b}')}\ent t\eq\swap{\vec{a}'}{\vec{b}'}\acts t:\sort$, where
$\vec{a}'$ is an ordering of $\as'$ and $\vec{b}'$ is a fresh tuple of the same
size. $\fe^{\sff\as}\ent t\eq t'$ by \ruleref{weak$^o$}. We need to prove that
$\fe'\ent t\eq\swap{\vec{a}}{\vec{b}}\swap{\vec{a}'}{\vec{b}'}\acts t:\sort$,
where $\fe'=\fe^{\sff\as\,\cup\,\supp(\vec{b})\,\cup\,\supp(\vec{b}')}$, $\vec{a}$ is
an ordering of $\as$ and $\vec{b}$ is a fresh copy. Apply \ruleref{subst$^o$}:
\begin{equation*}
  \inferrule
  { \fe'\ent t\eq\swap{\vec{a}}{\vec{b}'}\acts t:\sort \\
    \fe'\ent\{t/x\}:(\as\cup\supp(\vec{b})\freshfor x:\sort) \\
    (\as\cup\supp(\vec{b})\freshfor x:\sort)\ent x\eq
      \swap{\vec{a}}{\vec{b}}\,x:\sort }
  { \fe'\ent x\{t/x\}\eq
      (\swap{\vec{a}}{\vec{b}}\,x)\{\swap{\vec{a}}{\vec{b}'}\acts t/x\}:\sort }
\end{equation*}
The first premise follows from our hypothesis and \ruleref{weak$^o$}; the second
follows by Lem. \ref{susp->perm}, and the third by \ruleref{susp}.

\ruleref{$\freshfor$-equivar}: $(\as\cup\supp(\vec{a}')\freshfor x:\sort)\ent\pi
\,x\eq\swap{\pi\act\vec{a}}{\vec{a}'}\pi\,x:\sort$ by \ruleref{susp}.
\end{proof}

\section{The Empty Theory}\label{sec:empty}

In standard equational logic, provable equality corresponds to literal syntactic
equality in the empty theory $\emptyset$ without axioms. This is not the case
for NEoL because of the presence of freshness environments and suspensions. In
this section we give a simple syntax-directed description of equality in the
empty NEoL-theory. It is then straightforward to extend this to a description of
freshness in the empty NEL-theory.

\begin{figure}
  \begin{mathpar}
    \inferrule*
    { \ds(\pi,\pi')\freshfor x:\sort\in\fe }
    { \fe\ent \pi\,x\eq\pi'\,x:\sort } \and
    \inferrule*
    { \fe\ent t_1\eq t'_1:\sort_1 \\
      \cdots \\
      \fe\ent t_n\eq t'_n:\sort_n }
    { \fe\ent \op\,t_1\cdots t_n\eq\op\,t'_1\cdots t'_n:\sort }
  \end{mathpar}
  \caption{Syntax-directed rules for Nominal Equality}
  \label{Fig:empty}
\end{figure}

\begin{definition}
Fig. \ref{Fig:empty} provides syntax-directed rules for Nominal Equality. The
notation `$\ds(\pi,\pi')\freshfor x:\sort\in\fe$' means that $\fe(x)=(\as,
\sort)$ for $\as\supseteq\ds(\pi,\pi')$.
\end{definition}

\begin{theorem}\label{thm:neol_to_sdr}
The rules for NEoL of Fig. \ref{Fig:rulneol} imply the syntax-directed rules of
Fig. \ref{Fig:empty}.
\end{theorem}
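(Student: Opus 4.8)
The plan is to exhibit, for each of the two rules of Fig.~\ref{Fig:empty}, a derivation of its conclusion from its premises using only the rules of Fig.~\ref{Fig:rulneol}; this is precisely what the theorem asserts. Neither derivation uses much of NEoL's power. For the suspension rule, assume its side condition, which by definition means $\fe(x)=(\as,\sort)$ with $\ds(\pi,\pi')\subseteq\as$. Then \ruleref{susp} gives $(\ds(\pi,\pi')\freshfor x:\sort)\ent\pi\,x\eq\pi'\,x:\sort$, and since $\ds(\pi,\pi')\subseteq\as$ we have $(\ds(\pi,\pi')\freshfor x:\sort)\leq\fe$ in the sense of Def.~\ref{Def:logc}, so one application of \ruleref{weak$^o$} yields the conclusion $\fe\ent\pi\,x\eq\pi'\,x:\sort$.

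The congruence rule I would obtain by applying \ruleref{subst$^o$} to a \emph{generic} constructed term. Fix $\op:(\sort_1,\ldots,\sort_n)\arrow\sort$ and hypotheses $\fe\ent t_i\eq t'_i:\sort_i$ for $1\leq i\leq n$; choose $n$ distinct variables $x_1,\ldots,x_n$ and put $\fe_0\defeq(x_1:\sort_1,\ldots,x_n:\sort_n)$, a freshness environment all of whose freshness sets are empty. By \ruleref{refl} we have $\fe_0\ent\op\,x_1\cdots x_n\eq\op\,x_1\cdots x_n:\sort$. Let $\sub,\sub':\seof{\fe_0}\to\seof{\fe}$ be the substitutions $x_i\mapsto t_i$ and $x_i\mapsto t'_i$ (well defined since each $t_i,t'_i\in\Term{\Sig}{\sort_i}{\seof{\fe}}$). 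Now apply \ruleref{subst$^o$} with source environment $\fe_0$ and target $\fe$: its first premise $\fe\ent\sub\eq\sub'$ is exactly our list of hypotheses; its third premise is the instance of \ruleref{refl} above; and its second premise $\fe\ent\sub:\fe_0$, which in general is the freshness condition \eqref{eq:pro51}, collapses completely here because each freshness set of $\fe_0$ is empty --- the relevant ordering $\vec{a}_i$ and fresh tuple $\vec{a}'_i$ are then both the empty tuple, so $\swap{\vec{a}_i}{\vec{a}'_i}=\idp$, $\supp(\vec{a}'_i)=\emptyset$, and the $i$-th instance of \eqref{eq:pro51} reduces to $\fe\ent\sub(x_i)\eq\sub(x_i):\sort_i$, again an instance of \ruleref{refl}. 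The conclusion of \ruleref{subst$^o$} is then $\fe\ent(\op\,x_1\cdots x_n)\{\sub\}\eq(\op\,x_1\cdots x_n)\{\sub'\}:\sort$, and by the definition of substitution \eqref{eq:nel11} (using $\idp\acts t=t$) the two sides are literally $\op\,t_1\cdots t_n$ and $\op\,t'_1\cdots t'_n$, which is the desired conclusion. The boundary case $n=0$, where $\op:\sort$ is a constant, is just \ruleref{refl}.

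I do not expect a genuine obstacle here. The only steps needing a little care are (i) verifying that the premise $\fe\ent\sub:\fe_0$ really does reduce to instances of \ruleref{refl} when $\fe_0$ carries only empty freshness sets, so that the congruence rule --- although derived via \ruleref{subst$^o$} --- requires no actual freshness reasoning; and (ii) the bookkeeping that $(\op\,x_1\cdots x_n)\{\sub\}=\op\,t_1\cdots t_n$ through \eqref{eq:nel11}. Both are routine once the set-up above is in place.
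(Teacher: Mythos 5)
Your proposal is correct and follows the paper's own (very terse) proof exactly: the suspension rule via \ruleref{susp} followed by \ruleref{weak$^o$}, and the congruence rule via \ruleref{subst$^o$} applied to a generic constructed term over an environment with empty freshness sets, where the premise $\fe\ent\sub:\fe_0$ trivialises to instances of \ruleref{refl}. Your write-up simply makes explicit the bookkeeping the paper leaves implicit.
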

\begin{proof}
The suspension case hold by \ruleref{susp} and \ruleref{weak$^o$}; the
constructed term case by \ruleref{subst$^o$}.
\end{proof}

\begin{lemma}\label{lem:tech}
Say we have permutations $\pi,\pi'\in\Perm$ and a finite list $\vec{a}$ of atoms
so that $\ds(\pi,\pi')\subseteq\supp(\vec{a})$. Let $\vec{a}'$ be a list of
fresh atoms of the same size. Then if we can derive
\begin{equation*}
  \fe\ent t\eq t':\sort \mbox{ and }
  \fe^{\sff\supp(\vec{a}')}\ent t\eq\swap{\vec{a}}{\vec{a}'}\acts t:\sort
\end{equation*}
by the syntax-directed rules of Fig. \ref{Fig:empty}, then we can also derive
\begin{equation*}
  \fe\ent \pi\acts t\eq \pi'\acts t':\sort
\end{equation*}
\end{lemma}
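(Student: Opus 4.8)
The plan is to proceed by induction on the structure of $t$, using the syntax-directed rules of Fig.~\ref{Fig:empty} at each step, and to leverage the fact that $\pi\acts t$ and $\pi'\acts t'$ are built up in a controlled way by \eqref{eq:nel1}. Throughout, the generalised transposition $\swap{\vec{a}}{\vec{a}'}$ plays the role of a ``fresh renaming'': since $\ds(\pi,\pi')\subseteq\supp(\vec{a})$ and $\vec{a}'$ is fresh, we have $\ds(\swap{\vec{a}}{\vec{a}'}\pi,\swap{\vec{a}}{\vec{a}'}\pi')\subseteq\supp(\vec{a}')$, and the hypothesis $\fe^{\sff\supp(\vec{a}')}\ent t\eq\swap{\vec{a}}{\vec{a}'}\acts t:\sort$ lets us move freely between $t$ and its renamed copy in the context $\fe^{\sff\supp(\vec{a}')}$.

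For the suspension case, $t=\pi''\,x$ and $t'=\pi'''\,x$ for the same variable $x$ (since the first syntax-directed rule is the only one applicable and it forces the head variable to agree), with $\fe(x)=(\as_x,\sort)$ and $\ds(\pi'',\pi''')\subseteq\as_x$. We must derive $\fe\ent\pi\pi''\,x\eq\pi'\pi'''\,x:\sort$, i.e.\ show $\ds(\pi\pi'',\pi'\pi''')\subseteq\as_x$. Here I would argue that $\ds(\pi\pi'',\pi'\pi''')\subseteq\ds(\pi\pi'',\pi\pi''')\cup\ds(\pi\pi''',\pi'\pi''')$; the first set equals $\pi\act\ds(\pi'',\pi''')$ which need not be inside $\as_x$, so the naive bound fails — and this is exactly why the second hypothesis is needed. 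Instead I would use the renamed hypothesis: from $\fe^{\sff\supp(\vec{a}')}\ent \pi''\,x\eq\swap{\vec{a}}{\vec{a}'}\pi''\,x:\sort$ (a consequence of the hypothesis, read off via the first syntax-directed rule) we get $\ds(\pi'',\swap{\vec{a}}{\vec{a}'}\pi'')\subseteq\as_x\cup\supp(\vec{a}')$, hence $\supp(\vec{a})\cap\supp(\pi'')\subseteq\as_x\cup\supp(\vec{a}')$, and since $\vec{a}'$ is fresh for everything in sight this forces the part of $\supp(\pi'')$ lying in $\supp(\vec{a})$ to actually lie in $\as_x$. Combined with $\ds(\pi'',\pi''')\subseteq\as_x$ and $\ds(\pi,\pi')\subseteq\supp(\vec{a})$, a short disagreement-set calculation then gives $\ds(\pi\pi'',\pi'\pi''')\subseteq\as_x$, and the first syntax-directed rule concludes.

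For the constructed-term case, $t=\op\,t_1\cdots t_n$ and $t'=\op\,t'_1\cdots t'_n$ with $\fe\ent t_i\eq t'_i:\sort_i$ derivable for each $i$, and the renamed hypothesis similarly decomposes componentwise into $\fe^{\sff\supp(\vec{a}')}\ent t_i\eq\swap{\vec{a}}{\vec{a}'}\acts t_i:\sort_i$. By \eqref{eq:nel1}, $\pi\acts t=(\pi\act\op)(\pi\acts t_1)\cdots(\pi\acts t_n)$ and likewise for $\pi'\acts t'$; since $\ds(\pi,\pi')\subseteq\supp(\vec{a})$ and $\op$ is finitely supported, we also need $\pi\act\op=\pi'\act\op$, which holds provided $\ds(\pi,\pi')\nsupp\op$ — this follows from Lem.~\ref{lem:somefacts}(ii) once we know $\supp(\op)\cap\supp(\vec{a})\subseteq\supp(\op)\cap\as_{\text{(relevant)}}$, again extracted from the renamed hypothesis exactly as in the suspension case (or, more simply, it can be folded into a slight strengthening of the induction). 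Applying the induction hypothesis to each pair $t_i,t'_i$ (with the same $\vec{a},\vec{a}'$) yields $\fe\ent\pi\acts t_i\eq\pi'\acts t'_i:\sort_i$, and the second syntax-directed rule assembles these into $\fe\ent\pi\acts t\eq\pi'\acts t':\sort$.

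The main obstacle is the suspension case: the disagreement set $\ds(\pi\pi'',\pi'\pi''')$ is \emph{not} in general contained in $\as_x$ from the first hypothesis alone, so the crux is the bookkeeping argument showing that the renamed hypothesis $\fe^{\sff\supp(\vec{a}')}\ent t\eq\swap{\vec{a}}{\vec{a}'}\acts t:\sort$, together with the freshness of $\vec{a}'$, pins down enough of $\supp(\pi'')$ inside $\as_x$ to make the disagreement-set inclusion go through. Once this lemma about disagreement sets and fresh renamings is isolated and proved, both inductive cases close routinely; I would state that bookkeeping fact as a one-line sub-claim and verify it by a direct computation with the definitions of $\ds$ and $\supp(\vec{a}')\nsupp(\fe,\vec{a},t)$.
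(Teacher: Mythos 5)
Your overall strategy is the same as the paper's: structural induction, with the suspension case settled by a disagreement-set analysis that combines the first hypothesis with the renamed hypothesis and the freshness of $\vec{a}'$. Your suspension case is essentially right, with one slip of phrasing: what the second hypothesis delivers is $\ds(\xi,\swap{\vec{a}}{\vec{a}'}\xi)=\{a\mid\xi(a)\in\supp(\vec{a})\cup\supp(\vec{a}')\}$, i.e.\ the \emph{preimage} of $\supp(\vec{a})\cup\supp(\vec{a}')$ under the suspended permutation, not ``the part of $\supp(\xi)$ lying in $\supp(\vec{a})$''. With the preimage the calculation closes exactly as you indicate: if $a\in\ds(\pi\xi,\pi'\xi')$ but $a\notin\ds(\xi,\xi')$ then $\xi(a)\in\ds(\pi,\pi')\subseteq\supp(\vec{a})$, so $a\in\ds(\xi,\swap{\vec{a}}{\vec{a}'}\xi)$, hence $a\in\as_x\cup\supp(\vec{a}')$, and freshness of $\vec{a}'$ rules out the second disjunct.

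The genuine gap is in the constructed-term case, in the justification of $\pi\act\op=\pi'\act\op$. You propose to extract a bound of the form ``$\supp(\op)\cap\supp(\vec{a})$ is contained in the relevant freshness set'' from the renamed hypothesis ``exactly as in the suspension case'', but there is no such freshness set: freshness environments record constraints on \emph{variables} only, so the suspension-style bookkeeping has no analogue for an operation symbol, and no strengthening of the induction is needed either. The argument that actually works is different and shorter: a syntax-directed derivation of $\fe^{\sff\supp(\vec{a}')}\ent t\eq\swap{\vec{a}}{\vec{a}'}\acts t:\sort$ must end with the constructed-term rule of Fig.~\ref{Fig:empty}, whose conclusion has the \emph{same} operation symbol on both sides, so $\op=\swap{\vec{a}}{\vec{a}'}\act\op$. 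Since $\vec{a}'$ is fresh we have $\supp(\vec{a}')\nsupp\op$, and Lem.~\ref{lem:somefacts}(i) applied to $\swap{\vec{a}}{\vec{a}'}$ gives $\supp(\vec{a})\nsupp\swap{\vec{a}}{\vec{a}'}\act\op=\op$; then $\ds(\pi,\pi')\subseteq\supp(\vec{a})$ and Lem.~\ref{lem:somefacts}(ii) yield $\pi\act\op=\pi'\act\op$. With that observation inserted, your componentwise appeal to the induction hypothesis (with the same $\vec{a},\vec{a}'$) assembles the conclusion exactly as in the paper.
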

\begin{proof}
Suspensions: Say $t=\xi\,x$ and $t'=\xi'\,x$, so $\ds(\xi,\xi')\freshfor x:\sort
\in\fe$ and $\ds(\xi,\swap{\vec{a}}{\vec{a}'}\xi)\freshfor x:\sort\in\fe^{\sff
\supp(\vec{a}')}$. We must prove that $\fe\ent\pi\xi\,x\eq\pi'\xi'\,x$, i.e.
that $\ds(\pi\xi,\pi'\xi')\freshfor x:\sort\in\fe$. Take $a\in\ds(\pi\xi,\pi'
\xi')$. If $a\in\ds(\xi,\xi')$ we're done by our first assumption. But if $\xi
(a)\in\ds(\pi,\pi')$ then $\xi(a)\in\supp(\vec{a})$, so $a\in\ds(\xi,
\swap{\vec{a}}{\vec{a}'}\xi)$ and by our second assumption $a\freshfor x:\sort
\in\fe^{\sff\supp(\vec{a}')}$. But $\vec{a}'$ was chosen fresh, so $a\freshfor x\in
\fe$.

Constructed terms: Let $t=\op\,t_1\cdots$ and $t'=\op\,t'_1\cdots$, so for all
$i$, $\fe\ent t_i\eq t'_i:\sort_i$, $\fe^{\sff\supp(\vec{a}')}\ent t_i\eq
\swap{\vec{a}}{\vec{a}'}\acts t_i:\sort_i$ and $\op=\swap{\vec{a}}{\vec{a}'}\act
\op$. We must prove that $\fe\ent(\pi\act\op)(\pi\acts t_1)\cdots\eq(\pi'\act
\op)(\pi'\acts t'_1)\cdots$. Now $\supp(\vec{a}')\nsupp\op$, so by Lem.
\ref{lem:somefacts}(i) $\supp(\vec{a})\nsupp\swap{\vec{a}}{\vec{a}'}\act\op=
\op$. $\ds(\pi,\pi')\subseteq\supp(\vec{a})$, so by Lem. \ref{lem:somefacts}(ii)
$\pi\act\op=\pi'\act\op$. Finally, $\fe\ent\pi\acts t_i\eq\pi'\acts t'_i:
\sort_i$ follows by induction.
\end{proof}

\begin{theorem}\label{thm:emptyexplained}
Suppose $\fe\ento_{\emptyset}t\eq t':\sort$ by the rules for NEoL of Fig.
\ref{Fig:rulneol}. Then $\fe\ent t\eq t':\sort$ by the syntax-directed rules of
Fig. \ref{Fig:empty}.

Therefore by Thm. \ref{thm:neol_to_sdr} the syntax-directed rules coincide with
the empty NEoL-theory.
\end{theorem}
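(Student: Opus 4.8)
The plan is to induct on the derivation of $\fe\ento_{\emptyset}t\eq t':\sort$. Because the theory is empty there are no axioms, so I need only check that, for each rule of Fig.~\ref{Fig:rulneol}, its conclusion is derivable by the rules of Fig.~\ref{Fig:empty} whenever its premises are. Two of the cases are essentially immediate. \ruleref{susp} is a base case: if $\fe$ is $\ds(\pi,\pi')\freshfor x:\sort$ then $\ds(\pi,\pi')\freshfor x:\sort\in\fe$, so the suspension rule of Fig.~\ref{Fig:empty} fires directly. For \ruleref{refl} I would run a subsidiary induction on $t$: a suspension $\pi\,x$ has $\ds(\pi,\pi)=\emptyset$, which is contained in the freshness set of $x$ in any $\fe$ defined on $x$, and a constructed term is obtained from the constructed-term rule applied to reflexivity of its arguments.

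Before the other rules I would first record that the system of Fig.~\ref{Fig:empty} is closed under symmetry, transitivity, weakening, and atom-elimination; all four follow by routine inductions on syntax-directed derivations, which are easy because the shape of such a derivation is forced by the shapes of the two terms (so in particular both terms are simultaneously suspensions on the same variable, or constructed terms with the same head symbol). Symmetry uses $\ds(\pi,\pi')=\ds(\pi',\pi)$; transitivity uses $\ds(\pi,\pi'')\subseteq\ds(\pi,\pi')\cup\ds(\pi',\pi'')$; weakening uses that $\fe\leq\fe'$ only enlarges each variable's freshness set. These dispatch \ruleref{symm$^o$}, \ruleref{trans$^o$} and \ruleref{weak$^o$}. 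For \ruleref{atm-elim$^o$} I would show the system is closed under deleting a set $\as$ of atoms from $\fe^{\sff\as}$ when $\as\nsupp(\fe,t,t')$; the only nontrivial case is a suspension $\pi\,x\eq\pi'\,x$, where one uses that the support of a suspension under the meta-action~\eqref{eq:nel1x} is $\supp(\pi)$, so $\as\nsupp(\pi\,x,\pi'\,x)$ forces $\as\cap\ds(\pi,\pi')\subseteq\as\cap(\supp(\pi)\cup\supp(\pi'))=\emptyset$; hence if $\ds(\pi,\pi')$ lies in the freshness set of $x$ in $\fe^{\sff\as}$, it already lies in that of $x$ in $\fe$.

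The substantial case is \ruleref{subst$^o$}, for which I would prove a substitution lemma for the syntax-directed system: if $\fe\ent t\eq t':\sort$, all $\fe'\ent\sub(x_i)\eq\sub'(x_i):\sort_i$, and all instances of \eqref{eq:pro51} are derivable by Fig.~\ref{Fig:empty}, then so is $\fe'\ent t\{\sub\}\eq t'\{\sub'\}:\sort$. This goes by induction on the syntax-directed derivation of $\fe\ent t\eq t':\sort$. The constructed-term case is direct, since by \eqref{eq:nel11} $t\{\sub\}$ and $t'\{\sub'\}$ have the same head and the induction hypothesis covers the arguments. In the suspension case $t=\pi\,x$, $t'=\pi'\,x$ with $\ds(\pi,\pi')$ inside the freshness set $\as_x$ of $x$ in $\fe$, we have $t\{\sub\}=\pi\acts\sub(x)$ and $t'\{\sub'\}=\pi'\acts\sub'(x)$, so the goal is a syntax-directed derivation of $\fe'\ent\pi\acts\sub(x)\eq\pi'\acts\sub'(x):\sort$ --- which is exactly the conclusion of Lem.~\ref{lem:tech}, instantiated with an ordering $\vec{a}_x$ of $\as_x$ and the fresh tuple $\vec{a}'_x$ supplied by \eqref{eq:pro51}. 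Its hypotheses $\fe'\ent\sub(x)\eq\sub'(x):\sort$ and $(\fe')^{\sff\supp(\vec{a}'_x)}\ent\sub(x)\eq\swap{\vec{a}_x}{\vec{a}'_x}\acts\sub(x):\sort$ are, respectively, the first premise of \ruleref{subst$^o$} and the relevant instance of \eqref{eq:pro51}, and its side condition $\ds(\pi,\pi')\subseteq\supp(\vec{a}_x)$ holds since $\ds(\pi,\pi')\subseteq\as_x$. Applying this lemma to the syntax-directed derivations obtained from the induction hypothesis for the three premises of \ruleref{subst$^o$} completes that case and the induction.

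I expect \ruleref{subst$^o$} to be the main obstacle: every other rule is handled by a transparent structural manipulation of syntax-directed derivations, but substitution is where the object-level permutation prefix of a suspension meets the substituted term, and reconciling the two is precisely the content of Lem.~\ref{lem:tech}. The only remaining fuss is bookkeeping over freshness of the auxiliary tuples $\vec{a}'_i$ --- they must be fresh for $\fe'$ and for $\sub(x_i)$ so that the side conditions of Lem.~\ref{lem:tech} hold --- and this is exactly what the definition of $\fe'\ent\sub:\fe$ in \eqref{eq:pro51} guarantees.
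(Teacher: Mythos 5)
Your proof is correct and takes essentially the same route as the paper: induction on the rules of Fig.~\ref{Fig:rulneol} (no axioms in the empty theory), with the same disagreement-set and support arguments for \ruleref{symm$^o$}, \ruleref{trans$^o$}, \ruleref{weak$^o$} and \ruleref{atm-elim$^o$}, and with Lem.~\ref{lem:tech} carrying the weight of the \ruleref{subst$^o$} case. The only difference is organisational: you factor \ruleref{subst$^o$} through an explicit substitution lemma and invoke Lem.~\ref{lem:tech} once for arbitrary $\sub(x),\sub'(x)$, whereas the paper inlines a case analysis on the shapes of $t$, $t'$ and $\sub(x)$ (in effect re-deriving the suspension case of that lemma), so your packaging is marginally cleaner but not a different argument.
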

\begin{proof}
By induction on the rules of Fig. \ref{Fig:rulneol}.

\ruleref{susp} follows immediately from the suspension case of Fig.
\ref{Fig:empty}. \ruleref{refl} follows by an easy induction on the structure of
$t$. \ruleref{symm$^o$}, \ruleref{weak$^o$} and \ruleref{atm-elim$^o$} are also
straightforward.

\ruleref{trans$^o$}: Say $t=\pi\,x$, $t'=\pi'\,x$ and $t''=\pi''\,x$, so $\ds
(\pi,\pi')\freshfor x:\sort,\,\ds(\pi',\pi'')\freshfor x:\sort\in\fe$. We need
to show that $\ds(\pi,\pi'')\freshfor x:\sort\in\fe$, so take $a\in\ds(\pi,
\pi'')$. If $a\in\ds(\pi,\pi')$ we are done by our first assumption, but if $\pi
(a)=\pi'(a)$ then $a\in\ds(\pi',\pi'')$ so we are done by our second assumption.
The constructed term case is an easy induction.

\ruleref{subst$^o$}: Say $\fe\ent t\eq t'$ is $\fe\ent\pi\,x\eq\pi\,x:\sort$,
$\sub(x)=\xi\,y$ and $\sub'(x)=\xi'\,y$, so $\ds(\pi,\pi')\freshfor x:\sort\in
\fe$ and $\ds(\xi,\xi')\freshfor y:\sort\in\fe'$. The other premise
\eqref{eq:pro51} says that $\ds(\xi,\swap{\vec{a}}{\vec{a}'}\xi)\freshfor y:
\sort\in(\fe')^{\sff\supp(\vec{a}')}$ where $\fe(x)=(\as,\sort)$, $\vec{a}$ is an
ordering of $\as$, and $\vec{a}'$ is a fresh tuple of the same size. We need to
prove that $\ds(\pi\xi,\pi'\xi')\freshfor y:\sort\in\fe'$. Take $a\in\ds(\pi\xi,
\pi'\xi')$. If $a\in\ds(\xi,\xi')$ we are done, so say $\xi(a)\in\ds(\pi,\pi')
\subseteq\as$. Then $a\in\ds(\xi,\swap{\vec{a}}{\vec{a}'}\xi)$, so $a\freshfor
y:\sort\in(\fe')^{\sff\supp(\vec{a}')}$, but $\vec{a}'$ was chosen fresh, so we are
done.

Now take $t,t'$ as above, so $\ds(\pi,\pi')\freshfor x:\sort\in\fe$ still, but
$\sub(x)=\op\,t_1\cdots$ and $\sub'(x)=\op\,t'_1\cdots$. Then for $1\leq i\leq
n$, $\fe'\ent t_i\eq t'_i:\sort_i$, $(\fe')^{\sff\supp(\vec{a}')}\ent t_i\eq
\swap{\vec{a}}{\vec{a}'}\acts t_i:\sort$ and $\op=\swap{\vec{a}}{\vec{a}'}\act
\op$, where $\as,\vec{a},\vec{a}'$ are also as above. $\supp(\vec{a}')\nsupp\op$
implies $\as\nsupp\swap{\vec{a}}{\vec{a}'}\act\op=\op$, so $\pi\act\op\eq\pi'
\act\op$. $\fe'\ent\pi\acts t_i\eq\pi\acts t'_i$ by Lem. \ref{lem:tech}.

Finally, if $t,t'$ are constructed terms then the induction for
\ruleref{subst$^o$} is straightforward.
\end{proof}

\begin{corollary}\label{cor:freshback}
The empty NEL-theory, following the rules of Fig. \ref{Fig:rulnel}, coincides
with the syntax-directed rules of Fig. \ref{Fig:empty} along with these new
rules for freshness:
\begin{mathpar}
  \inferrule*
  { \pi^{-1}\act\as\freshfor x:\sort\in\fe }
  { \fe\ent \as\freshfor\pi\,x:\sort } \and
  \inferrule*
  { \fe\ent \as\freshfor t_1:\sort_1\\
    \cdots\\
    \fe\ent \as\freshfor t_n:\sort_n\\
    \as\nsupp\op }
  { \fe\ent \as\freshfor \op\,t_1\cdots t_n:\sort }
\end{mathpar}
\end{corollary}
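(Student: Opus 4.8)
The plan is to route everything through the results already established for equations, using Lem.~\ref{lem:frsheqeq} to trade the freshness connective for an equation. Since the empty theory has no axioms we have $\emptyset^o=\emptyset$ (Def.~\ref{Def:Th^o}), so Thm.~\ref{thm:neol_to_nel} and Thm.~\ref{thm:big} give that the empty NEL- and NEoL-theories prove exactly the same equations $\fe\ent t\eq t':\sort$, and Thm.~\ref{thm:emptyexplained} together with Thm.~\ref{thm:neol_to_sdr} identifies these with the equations derivable by Fig.~\ref{Fig:empty}. This already handles the equational fragment; the remaining work is to treat freshness assertions $\fe\ent_\emptyset\as\freshfor t:\sort$ and to show the two new rules of the corollary capture exactly these (a general judgement $\fe\ent_\emptyset\as\freshfor t\eq t':\sort$ then reduces, by standard NEL manipulations with \ruleref{symm} and \ruleref{trans}, to the equation $\fe\ent_\emptyset t\eq t':\sort$ together with the freshness assertions $\fe\ent_\emptyset\as\freshfor t:\sort$ and $\fe\ent_\emptyset\as\freshfor t':\sort$).

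By Lem.~\ref{lem:frsheqeq}, $\fe\ent_\emptyset\as\freshfor t:\sort$ holds iff $\fe^{\sff\supp(\vec a')}\ent t\eq\swap{\vec a}{\vec a'}\acts t:\sort$ holds in empty NEL, where $\vec a$ orders $\as$ and $\vec a'$ is a tuple of the same size with $\supp(\vec a')\nsupp(\fe,\as,t)$; by the previous paragraph this equation holds iff it is derivable by Fig.~\ref{Fig:empty}. I would then prove, by induction on the structure of $t$, that this is in turn equivalent to derivability of $\fe\ent\as\freshfor t:\sort$ by the two new rules (whose premises mention only strictly smaller freshness assertions and side conditions). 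If $t=\pi\,x$, both sides of the equation are suspensions of $x$, so Fig.~\ref{Fig:empty} can only apply its suspension rule, whose side condition reads $\ds(\pi,\swap{\vec a}{\vec a'}\pi)\freshfor x:\sort\in\fe^{\sff\supp(\vec a')}$; since $\swap{\vec a}{\vec a'}$ moves precisely $\supp(\vec a)\cup\supp(\vec a')$ and $\supp(\vec a')\nsupp\pi\,x$ forces $\supp(\vec a')\cap\supp(\pi)=\emptyset$, this condition is equivalent to $\pi^{-1}\act\as\freshfor x:\sort\in\fe$, i.e.\ the premise of the first new rule. If $t=\op\,t_1\cdots t_n$, then $\swap{\vec a}{\vec a'}\acts t=(\swap{\vec a}{\vec a'}\act\op)(\swap{\vec a}{\vec a'}\acts t_1)\cdots(\swap{\vec a}{\vec a'}\acts t_n)$, so the congruence rule of Fig.~\ref{Fig:empty} applies iff the heads agree, $\op=\swap{\vec a}{\vec a'}\act\op$; applying $\swap{\vec a}{\vec a'}$ to $\supp(\vec a')\nsupp\op$ via Lem.~\ref{lem:somefacts}(i) and using $\swap{\vec a}{\vec a'}\act\supp(\vec a')=\supp(\vec a)$ shows this is equivalent to $\as\nsupp\op$, while the remaining premises $\fe^{\sff\supp(\vec a')}\ent t_i\eq\swap{\vec a}{\vec a'}\acts t_i:\sort_i$ fall to the induction hypothesis --- exactly the premises of the second new rule. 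In the backwards direction of each case one rebuilds the equation with the congruence rule of Fig.~\ref{Fig:empty} (using $\as\cup\supp(\vec a')\nsupp\op$ and Lem.~\ref{lem:somefacts}(ii) to recover $\op=\swap{\vec a}{\vec a'}\act\op$) and then invokes Lem.~\ref{lem:frsheqeq}.

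The suspension case is the one point that needs care: one must check that $\ds(\pi,\swap{\vec a}{\vec a'}\pi)\subseteq\bar b\cup\supp(\vec a')$, where $\fe(x)=(\bar b,\sort)$, is equivalent to $\pi^{-1}(\as)\subseteq\bar b$. The reverse inclusion is immediate. For the forward one, an atom $b$ with $\pi(b)\in\supp(\vec a)$ but $b\notin\bar b$ would have to lie in $\supp(\vec a')$; but $\supp(\vec a')\cap\supp(\pi)=\emptyset$ forces $\pi(b)=b$, and then $\supp(\vec a)\cap\supp(\vec a')=\emptyset$ gives $\pi(b)=b\notin\supp(\vec a)$, a contradiction. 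Everything else --- the transfers in the first paragraph and the constructed-term step --- is routine given the earlier development.
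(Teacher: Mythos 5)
Your proposal is correct and follows essentially the same route as the paper's own (very terse) proof, which simply cites Lem.~\ref{lem:frsheqeq}, Thm.~\ref{thm:big} and Thm.~\ref{thm:emptyexplained}; your induction on $t$ merely makes explicit the verification, left implicit in the paper, that the translated equation $\fe^{\sff\supp(\vec{a}')}\ent t\eq\swap{\vec{a}}{\vec{a}'}\acts t:\sort$ is derivable in Fig.~\ref{Fig:empty} exactly when the two new freshness rules apply, and both your suspension-case bookkeeping and your use of Lem.~\ref{lem:somefacts} for the head symbol check out. The one small quibble is your parenthetical claim that the bare equation $\fe\ent_{\emptyset} t\eq t':\sort$ is extracted from $\fe\ent_{\emptyset}\as\freshfor t\eq t':\sort$ by \ruleref{symm} and \ruleref{trans} alone --- those rules only enlarge the freshness set, so they yield the two freshness assertions but not the equation; that direction is instead supplied directly by Thm.~\ref{thm:big} (with $\emptyset^o=\emptyset$), which you already invoke, so the argument stands.
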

\begin{proof}
Lem. \ref{lem:frsheqeq}, Thm. \ref{thm:big} and Thm. \ref{thm:emptyexplained}.
\end{proof}

\section{Related Work}\label{sec:related}

\textbf{Equational logic for nominal unification.} The first notion of
equational logic over nominal sets to be developed were the syntax-directed
rules of \cite[Fig. 2]{Urban:Nominal}, which were used in the definition of
nominal unification. The syntax that directs this definition is based on
\emph{nominal signatures}, which compared to the signatures of Def.
\ref{def:sig} have a richer sort system and a set, rather than nominal set, of
operation symbols.

The rules ($\eq$-suspension), ($\eq$-function symbol), ($\freshfor$-suspension)
and ($\freshfor$-function symbol) of \cite{Urban:Nominal} clearly match the
syntax-directed rules of Fig. \ref{Fig:empty} and Cor. \ref{cor:freshback},
apart from the premise $\as\nsupp\op$, which is non-trivial only when $\op$ may
have non-empty support. If we add operation symbols for unit, pairing, atoms and
atom-abstraction then, via Fig. \ref{Fig:empty} and Cor.
\ref{cor:freshback}, we recover all of the rules of \cite{Urban:Nominal} except
for ($\eq$-abstraction-2) and ($\freshfor$-abstraction-1): 
\begin{mathpar}
  \inferrule*
  { a \neq a'\\
    \fe\ent t\eq \swap{a}{a'}\acts t' \\
    \fe\ent a\freshfor t' }
  { \fe\ent a.\,t \eq a'.\,t' } \and
  \inferrule*
  { \ }
  { \fe\ent a\freshfor a.\,t }
\end{mathpar}
where $a.\,t$ is the atom-abstraction binding $a$ in $t$. These are the rules
for $\alpha$-equivalence. Following \cite{Clouston:Binding}, we may capture
these rules by moving from the empty theory to the theory with one axiom
$$
(b\freshfor x)\ent a.\,x\eq b.\,\swap{a}{b}\,x
$$
or equivalently, $(x)\ent a\freshfor a.\,x$.

\textbf{Nominal Algebra (NA).} NA \cite{Gabbay:Nominal} is a logic independently
developed to reason about the same properties as NEL, but with some
interestingly different design choices. NA is built on nominal signatures, so in
the empty theory equality should be, as above, $\alpha$-equivalence over these
signatures rather than the weaker equivalence of Sec. \ref{sec:empty}. NA
employs a set, rather than nominal set, of operation symbols, which may make it
less expressive than NEL. For example, with NEL one could define a nominal set
of operation symbols isomorphic to the nominal set $\Atom^{(2)}$ of disjoint
pairs of atoms; this does not seem to be possible with NA.

Finally, NA employs a syntax-directed notion of freshness that is weaker than
that used by NEL; in particular the transitive property
\begin{mathpar}
  \inferrule*
  { \fe\ent a\freshfor t:\sort \\
    \fe\ent t\eq t':\sort }
  { \fe\ent a\freshfor t':\sort }
\end{mathpar}
does not hold. In \cite[Sec. 5]{Gabbay:Nominal} design alternatives for NA were
discussed where atom-abstraction sorts were eliminated and the freshness
relation strengthened to match that of NEL. However no notion of NA with
equality only has been proposed along the lines of NEoL, although it seems
likely that such a logic could be defined by working in close analogy with the
results of this paper.

\textbf{Synthetic Nominal Equational Logic (SNEL).} Term Equational Systems
\cite{Fiore:Term} are a category theoretic account of equational logic,
including proof theory. This framework allows equational logic to be naturally
generalised from the category of sets to other categories, with proof rules
automatically generated in each new setting so long as the new categories obey
certain certain constraints. Following NEL and NA, Term Equational Systems were
developed in the category of nominal sets, and the resulting logic is called
SNEL \cite[Sec. 5]{Fiore:Term}. SNEL is another notion of nominal sets with
equations only, but no proof was offered that the addition of  freshness
judgements would not strengthen the logic. The authors were, however, aware of
the results presented in this paper, which could be seen as a sanity check on
the development of the equation-only SNEL.

It should also be noted that the syntax of SNEL is not entirely in keeping with
that which is commonly used in nominal logic, as we have no freshness
environments or suspensions. For example, the axiom \eqref{eq:nel5} for
$\alpha$-equivalence in the untyped $\lambda$-calculus would be written
$$
[a,b]\{x:1\}\ent \op[lam]_a\,x(a) \eq \op[lam]_b\,x(b)\enspace.
$$
Here the metavariables explicitly refer to names they may depend on. This
differs from the standard mathematical treatment of bound names, which most
applications of nominal techniques try to capture. It is an interesting question
whether a more standard presentation of equational logic over nominal sets, such
as NEL or NA, could be derived in this category theoretic context.

\nocite{*}
\bibliographystyle{eptcs}
\bibliography{LFMTP_bib}
\end{document}